\newcommand*{\Otilde}{\widetilde{O}}
\DeclareMathOperator{\diam}{diam}
\DeclareMathOperator{\ecc}{ecc}
\newcommand*{\emDash}{\hspace*{.25pt}--\hspace*{.25pt}}
\newcommand*{\nwspace}{\hspace*{.1em}} 
\let\oldsqrt\sqrt
\def\hksqrt{\mathpalette\DHLhksqrt}
\def\DHLhksqrt#1#2{\setbox0=\hbox{$#1\oldsqrt{#2\,}$}\dimen0=\ht0
   \advance\dimen0-0.2\ht0
   \setbox2=\hbox{\vrule height\ht0 depth -\dimen0}%
   {\box0\lower0.4pt\box2}}
\renewcommand\sqrt\hksqrt
\newcounter{property}										
\crefname{property}{Property}{Properties}					
\newcommand*{\inlineequation}[2][]{%
  \begingroup
    \refstepcounter{property}%
    \ifx\\#1\\%
    \else
      \label{#1}%
    \fi
    \relpenalty=10000 %
    \binoppenalty=10000 %
    \ensuremath{%
      #2%
    }%
    ~\@eqnnum
  \endgroup
}
\renewcommand{\leq}{\leqslant}
\renewcommand{\geq}{\geqslant}
\renewcommand{\le}{\leqslant}
\renewcommand{\ge}{\geqslant}
\author{Davide Bilò}%
{Department of Humanities and Social Sciences, University of Sassari, Italy}%
{davidebilo@uniss.it}%
{0000-0003-3169-4300} 
{This work was partially supported by the Research Grant FBS2016\_BILO, funded by ``Fondazione di Sardegna'' in 2016.} 
\author{Sarel Cohen}%
{Hasso Plattner Institute, University of Potsdam, Germany}%
{sarel.cohen@hpi.de}%
{} 
{} 
\author{Tobias Friedrich}%
{Hasso Plattner Institute, University of Potsdam, Germany}%
{tobias.friedrich@hpi.de}%
{0000-0003-0076-6308} 
{} 
\author{Martin Schirneck}%
{Hasso Plattner Institute, University of Potsdam, Germany}%
{martin.schirneck@hpi.de}%
{} 
{} 
\authorrunning{D.~Bilò, S.~Cohen, T.~Friedrich, and M.~Schirneck}
\title{Space-Efficient Fault-Tolerant Diameter Oracles}
\keywords{derandomization, diameter, distance sensitivity oracle, fault-tolerant data structure, space lower bound}
\begin{document}

\maketitle

\begin{abstract}
We design \emph{$f$-edge fault-tolerant diameter oracles} ($f$-FDO, or simply FDO if $f=1$). For a given directed or undirected and possibly edge-weighted graph $G$ with $n$ vertices and $m$ edges and a positive integer $f$, we preprocess the graph and construct a data structure that, when queried with a set $F$ of edges, where $|F| \leq f$, returns the diameter of $G\,{-}\,F$. An $f$-FDO has stretch $\sigma \ge 1$ if the returned value $\widehat D$ satisfies $\diam(G\,{-}\,F) \le \widehat D \le \sigma \diam(G\,{-}\,F)$.

For the case of a single edge failure ($f=1$) in an unweighted directed graph, 
there exists an approximate FDO by Henzinger et al.\ [ITCS 2017] with stretch $(1+\varepsilon)$, constant query time, space $O(m)$, and a combinatorial preprocessing time of $\widetilde{O}(mn + n^{1.5} \sqrt{Dm/\varepsilon})$, where $D$ is the diameter.

We present an FDO for directed graphs with the same stretch, query time, and space. 
It has a preprocessing time of $\widetilde{O}(mn + n^2/\varepsilon)$, which is better for constant $\varepsilon > 0$. The preprocessing time nearly matches a conditional lower bound for combinatorial algorithms, also by Henzinger et al. With fast matrix multiplication, we achieve a preprocessing time of $\widetilde{O}(n^{2.5794} + n^2/\varepsilon)$.
We further prove an information-theoretic lower bound showing that any FDO with stretch better than $3/2$ requires $\Omega(m)$ bits of space. Thus, for constant $0 < \varepsilon < 3/2$, our combinatorial $(1+ \varepsilon)$-approximate FDO is near-optimal in all parameters.

In the case of multiple edge failures ($f>1$) in undirected graphs with non-negative edge weights, we give an $f$-FDO with stretch $(f+2)$, query time $O(f^2\log^2{n})$, $\widetilde{O}(fn)$ space, and preprocessing time $\widetilde{O}(fm)$. We complement this with a lower bound excluding any finite stretch in $o(fn)$ space.

Many real-world networks have polylogarithmic diameter. We show that for those graphs and up to $f = o(\log n/ \log\log n)$ failures one can swap approximation for query time and space. We present an exact combinatorial $f$-FDO with preprocessing time $mn^{1+o(1)}$, query time $n^{o(1)}$, and space $n^{2+o(1)}$. When using fast matrix multiplication instead, the preprocessing time can be improved to $n^{\omega+o(1)}$,
where $\omega < 2.373$ is the matrix multiplication exponent.
\end{abstract}

\section{Introduction}
\label{sec:intro}

The diameter is one of the most fundamental graph parameters.
It plays a particular significant role in the analysis of communication networks
as the time to transmit a message to all nodes is strongly related with the diameter.
Several lines of work have recently attacked the problem of computing the diameter 
in different settings. 
For example, Choudhary and Gold~\cite{ChoudharyG20} constructed diameter spanners,
which are subgraphs that approximately preserve the diameter of the original graph,
Ancona et al.~\cite{AnconaHRWW19} developed algorithms for computing the diameter in dynamic scenarios and proved matching conditional lower bounds,
and Bonnet~\cite{Bonnet21} proved that, 
for any constant $\varepsilon > 0$, computing a $(7/4 -\varepsilon)$-approximation of the diameter of a sparse graph $n$ vertices and $m=n^{1+o(1)}$ edges requires $m^{4/3-o(1)}$ time,
unless the Strong Exponential Time Hypothesis fails.

In this paper, we approach the diameter from the perspective of fault tolerance.
A communication network may be subject to a small number of transient failures,
and we want to quickly find out the new diameter without recomputing it from scratch.
Therefore, we study the problem of constructing space-efficient data structures that can quickly report the diameter even if up to $f$ edges fail in the graph.
We refer to them as \emph{$f$-edge fault-tolerant diameter oracles} 
($f$-FDO, or simply FDO if $f=1$).
More precisely, given an undirected or directed and possibly edge-weighted graph $G$
and a positive integer $f$, we want to construct an $f$-FDO that, when queried on a set $F$ of up to $f$ edges of $G$, returns a value $\widehat D$ that is always at least as large the diameter of $G\,{-}\,F$,
denoted by $\diam(G\,{-}\,F)$.
We say that an $f$-FDO has a \emph{stretch} of $\sigma \ge 1$ (or that it is \emph{$\sigma$-approximate})
if the value $\widehat D$ returned by the oracle 
additionally satisfies $\diam(G\,{-}\,F) \leq  \widehat D \leq \sigma \diam(G\,{-}\,F)$. 

When designing $f$-FDOs one must find a good compromise between the following parameters: the stretch, the time needed to query the oracle, the size of the data structure,
and the preprocessing time needed to build it.
We focus particularly on space-optimal solutions, while keeping the query and preprocessing times low.
For the case of a single edge failure in undirected edge-weighted graphs, 
there are two folklore FDOs known.
One reports the exact diameter and has size $O(m)$,
while the other takes $O(n)$ space, but guarantees only a stretch of $2$.
(more details are given in \Cref{subsec:prelims_trivial}.)
In a sense they mark the extreme points of a spectrum.
It is natural to ask whether there are more trade-offs possible between the stretch and size of an FDO.
More precisely, we pose the following question.

\textbf{\textsf{Question 1 - space vs.\ approximation trade-off.}}
What is the minimum achievable size of an FDO for a given stretch $\sigma$?
To answer the question, we prove an information-theoretic lower bound.
It shows that for undirected unweighted graphs and every (even non-constant) $1 \le \sigma < 3/2$,
every $\sigma$-approximate diameter oracle requires $\Omega(m)$ bits of space.
The space lower bound also holds for the harder case of directed graphs.
The size of the exact folklore FDO is thus optimal up to the size of a machine word.
Moreover, we prove that the stretch $2$ of the approximate FDO
cannot be improved on weighted graphs while keeping $O(n)$ space.  

\begin{restatable}{theorem}{spacelowerboundsingle}
\label{thm:space_lower_bound_single}
	Any FDO with stretch $\sigma = \sigma(m) < 3/2$	must take $\Omega(m)$ bits of space on 
	undirected graphs with $m$ edges.
	The bound increases to $\sigma < 2$ if the graphs are edge-weighted.
\end{restatable}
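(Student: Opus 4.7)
The plan is an information-theoretic encoding argument. I build a family $\mathcal G$ of graphs on $n$ vertices and $\Theta(m)$ edges with $|\mathcal G|=2^{\Omega(m)}$ such that any correct $\sigma$-approximate FDO must land in distinct internal states on distinct members of $\mathcal G$; counting states then forces $\Omega(m)$ bits.

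\textbf{Encoding family.} Set $N=\Theta(\sqrt m)$ and let $H=(U\cup V,E_H)$ range over all bipartite graphs with $|U|=|V|=N$. For each such $H$ I construct $G_H$ by embedding $H$ into a fixed, $H$-independent skeleton (a few hubs together with per-vertex probe nodes) on $O(N)$ extra vertices and $O(N)$ extra edges, so that $|V(G_H)|=\Theta(N)$ and $|E(G_H)|=\Theta(N^2)=\Theta(m)$. There are $2^{\Omega(m)}$ choices of $H$.

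\textbf{Decoding via single-edge failures.} The crucial requirement on the skeleton is that for every candidate pair $(u,v)\in U\times V$ there is a skeleton edge $e_{uv}$ (the same edge for every $H$) such that
\[
\diam(G_H-e_{uv})=
\begin{cases} 2 & \text{if } (u,v)\in E_H,\\ 3 & \text{if } (u,v)\notin E_H,\end{cases}
\]
in the unweighted case, and analogously $1$ versus $2$ in the edge-weighted case (by replacing the length-$3$ detour with a weight-$2$ detour and the shortcut with unit weight). Since the FDO has stretch $\sigma<3/2$ (resp.\ $<2$), the returned value $\widehat D$ on input $\{e_{uv}\}$ is forced into one of the two disjoint intervals $[2,2\sigma]\subset[2,3)$ or $[3,3\sigma]$ (resp.\ $[1,\sigma)$ or $[2,2\sigma]$), so a single query reveals the bit $[(u,v)\in E_H]$. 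Running this over all $N^2$ candidate pairs reconstructs $H$ from the FDO; hence different $H$'s induce different FDO states, giving at least $\log_2|\mathcal G|=\Omega(N^2)=\Omega(m)$ bits of space.

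\textbf{Main obstacle.} The delicate part is the skeleton design in the second step. Failing $e_{uv}$ must expose \emph{exactly} the bit $(u,v)$: the post-failure diameter has to coincide with the distance between a dedicated probe pair $p_u,p_v$ whose only length-$2$ connection (after removing $e_{uv}$) is the shortcut $p_u\,{-}\,u\,{-}\,v\,{-}\,p_v$, while all other pairs of vertices keep distance at most $2$. A simple hub-and-bridge gadget leaks many bits from one failure and therefore does not suffice; the skeleton must instead contain per-candidate probe vertices that are attached to $u$ and $v$ only, and a denser backbone that caps distances everywhere else at $2$. Verifying the diameter identity then reduces to a case analysis of paths through the gadget, which is routine once the structure is fixed.

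\textbf{Weighted case.} The same construction and argument go through after rescaling the gadget weights: every shortcut edge has weight $1$, every detour has weight $2$, so the relevant diameters become $1$ or $2$ and the gap $3/2$ widens to $2$. Consequently the lower bound tightens from $\sigma<3/2$ to $\sigma<2$ on edge-weighted graphs.
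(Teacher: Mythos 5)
Your high-level strategy matches the paper's exactly: an incompressibility argument that encodes a dense bipartite graph (equivalently, a binary matrix) into the single-edge-failure diameters, decoded via the $2$-versus-$3$ dichotomy, which any FDO with stretch $\sigma<3/2$ must be able to distinguish, and then a weight rescaling to push the weighted threshold to $\sigma<2$. However, there is a genuine gap in the step you flag as the ``main obstacle,'' and the constraints you impose on the skeleton are mutually infeasible. You ask for an $H$-independent skeleton with only $O(N)$ extra vertices \emph{and} $O(N)$ extra edges, you explicitly reject hub gadgets (correctly, since a single hub would trivialize all failure diameters), and you simultaneously need a ``backbone that caps distances everywhere else at $2$.'' But when $H$ is the empty bipartite graph, $G_H$ is a graph on $\Theta(N)$ vertices with only $O(N)$ skeleton edges and no hubs; such a graph has average degree $O(1)$ and diameter $\Omega(\log N)$, not $2$. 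Hence no skeleton of the stated size can meet your requirements. The paper's actual gadget is decidedly dense: it places the $\Theta(N)$ vertices into four cliques $A,B,C,D$ of size $N$, adds three perfect matchings among $A,B,C$, a full biclique $(B,D)$, and encodes $X_{i,j}$ as the edge $\{c_i,d_j\}$. This skeleton has $\Theta(N^2)=\Theta(m)$ edges of its own; the failure $\{b_i,d_j\}$ isolates the probe $N[a_i]\cap N[d_j]$ precisely to the set $\{c_i \mid X_{i,j}=1\}$, which is what makes the one-bit decoding work. (Sparsification down to arbitrary $m$ is then handled by attaching an independent set $R$ of low-degree dummies, not by thinning the skeleton.) Your weighted sketch has a similar looseness: getting exact diameters $1$ versus $2$ would require the base graph to be essentially a clique; the paper instead assigns weight $\varepsilon'$ to the matching and $R$-edges and weight $2$ elsewhere, obtaining a gap $(4+\varepsilon')/(2+\varepsilon')\to 2$, which is the correct way to realize every stretch below $2$. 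In short, the blueprint is right and identical in spirit to the paper's, but the skeleton you describe cannot be built, and supplying the actual gadget is the substance of the proof, not a routine case analysis.
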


When we focus our attention on the preprocessing time,
the exact FDO can be constructed in $\Otilde(n^3)$ time\footnote{
	For a positive function $g(n,m,f)$, we use $\Otilde(g)$ 
	to denote $O(g \cdot \textsf{polylog}(n))$.
}
using the distance sensitivity oracle (DSO) of Bernstein and Karger~\cite{BeKa09}.
Henzinger et al.~\cite{HenzingerL0W17} proved an essentially matching conditional lower bound
for combinatorial\footnote{%
    The term ``combinatorial algorithm'' is not well-defined, and is often interpreted as not using any matrix multiplication.
    Arguably, combinatorial algorithms can be considered 
    efficient in practice as the constants hidden in the matrix
    multiplication bounds are rather high.
}
algorithms.
They assumed that any combinatorial algorithm requires $n^{3-o(1)}$ time to multiply
two Boolean $n \times n$ matrices, known as the BMM conjecture.
The restriction to combinatorial algorithms is crucial
as the task is reducible to integer matrices and one can use
fast matrix multiplication to solve it in $O(n^\omega)$ time,
where $\omega<2.37286$ is the matrix multiplication exponent~\cite{AlmanVWilliams21RefinedLaserMethod}.
Under the BMM conjecture, Henzinger~et~al.~\cite{HenzingerL0W17} showed that,
for $0 < \varepsilon < 1/3$, any combinatorial preprocessing algorithm 
requires $n^{3-o(1)}$ time to build an FDO of stretch $(1+\varepsilon)$,
even if we allow $O(n^{2-\delta})$ query time for any constant $\delta > 0$.

They match this bound with an FDO with stretch $(1+\varepsilon)$ and $O(1)$ query time
that can be constructed in time $\Otilde(mn + n^{1.5}\sqrt{\diam(G) \cdot m/\varepsilon})$.
Their oracle also reports the radius and vertex eccentricities
in the presence of a single edge failure.
Even on sparse graphs with $m = \Otilde(n)$ edges and constant diameter, 
the preprocessing time is $\Otilde(n^{2.5}/\sqrt{\varepsilon})$.
For constant $\varepsilon > 0$, this is by a factor $\sqrt{n}$ larger than
the $\Otilde(mn)$ time needed to build the DSO of Bernstein and Karger~\cite{BeKa09}.
It is interesting whether one can close the gap.

\textbf{\textsf{Question 2 - fast preprocessing time.}} 
Does there exist a combinatorial algorithm that constructs in $\Otilde(mn)$ time
an FDO with stretch $(1+\varepsilon)$ and constant query time? 
In addition, can one bypass the combinatorial lower bound
by using fast matrix multiplication?
We answer these questions affirmatively for the diameter case with the following theorem.
The proof of the algebraic part uses the DSO
presented very recently by Gu and Ren~\cite{GuRen21ConstructingDSO_ICALP}. 

\begin{restatable}{theorem}{singlefailure}
\label{thm:single_failure}
	For every unweighted directed graph and $\varepsilon > 0$,
	there exists a randomized combinatorial $(1+\varepsilon)$-approximate FDO
	that takes $O(m)$ space and has $\Otilde(mn + n^2/\varepsilon)$ preprocessing time
	and $O(1)$ query time.
	The returned values are correct w.h.p.\footnote{%
		An event occurs \emph{with high probability} (w.h.p.)
		if it has probability at least $1 - n^{-c}$ for some $c > 0$.
	}
    Using fast matrix multiplication instead, one can construct the FDO in time
    $\Otilde(n^{2.5794} + n^2/\varepsilon)$.
\end{restatable}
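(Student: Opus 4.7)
The FDO will be an array $\widehat D\colon E\to\mathbb{N}$ with one $(1+\varepsilon)$-approximate value per edge, giving $O(1)$ queries and $O(m)$ space. The preprocessing task reduces to filling this array within the claimed time.

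I would begin by assembling the infrastructure. First, run BFS from every vertex to compute $d_G(u,v)$ for all pairs and $D=\mathrm{diam}(G)$ in time $O(mn)$. Second, build a constant-query-time distance sensitivity oracle $\mathcal{D}$ returning $d_{G-e}(u,v)$: Bernstein--Karger~\cite{BeKa09} in the combinatorial setting, with preprocessing $\Otilde(mn)$; or Gu--Ren~\cite{GuRen21ConstructingDSO_ICALP} in the algebraic setting, with preprocessing $\Otilde(n^{2.5794})$. Both structures are used only during preprocessing, then discarded.

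To compute $\widehat D(e)$ I would use a pivot scheme. Discretize by the $O(\log n/\varepsilon)$ dyadic scales $L_i=(1{+}\varepsilon)^i$ and, for each $i$, sample a random set $B_i\subseteq V$ of size $\widetilde\Theta(n/(\varepsilon L_i))$ so that, w.h.p., every directed shortest path of length at least $L_i$ (in $G$ or in any $G-e$) contains a pivot of $B_i$ among its first $\varepsilon L_i$ vertices. If $\mathrm{diam}(G-e)\in[L_i,L_{i+1})$ is realized by a pair $(u^*,v^*)$ and a pivot $p\in B_i$ lies close to $u^*$ on the corresponding shortest path in $G-e$, then $\mathrm{ecc}^+_{G-e}(p)\geq d_{G-e}(p,v^*)\geq(1-\varepsilon)\mathrm{diam}(G-e)$. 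Setting
\[
\widehat D(e)\;:=\;\frac{1}{1-\varepsilon}\,\max_i\,\max_{p\in B_i}\,\mathrm{ecc}^+_{G-e}(p),
\]
one verifies $\mathrm{diam}(G-e)\leq\widehat D(e)\leq(1+O(\varepsilon))\,\mathrm{diam}(G-e)$; rescaling $\varepsilon$ yields the claimed stretch, and the same construction applied to reversed edges handles the in-eccentricity direction.

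The main obstacle is to evaluate all the $\mathrm{ecc}^+_{G-e}(p)$ values within the budget of $\Otilde(n^2/\varepsilon)$ DSO calls. Two observations are decisive. First, fixing for each pivot $p$ a shortest-path out-tree $T_p$, a failure $e\notin T_p$ leaves $\mathrm{ecc}^+_{G-e}(p)=\mathrm{ecc}^+_G(p)$ unchanged, while $e\in T_p$ perturbs distances only within the subtree $S_e$ cut off by $e$, so only $|S_e|$ DSO queries are needed to refresh that eccentricity. Second, at scale $L_i$ the estimator only needs to certify the lower bound $(1-\varepsilon)L_i$, which localizes the re-query to vertices of $S_e$ within distance $O(L_i)$ of $p$. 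Combining $|B_i|=\widetilde\Theta(n/(\varepsilon L_i))$ with $O(nL_i)$ queries per pivot contributes $\Otilde(n^2/\varepsilon)$ work per scale, and a careful geometric accounting across the $O(\log n/\varepsilon)$ scales, together with a union bound over the $m$ possible failures to preserve the hitting-set property, keeps the total within the claimed bound. Making this localization-and-amortization rigorous is the heart of the proof; the final array $\widehat D(\cdot)$ is then the only object kept, giving $O(m)$ space and $O(1)$ queries.
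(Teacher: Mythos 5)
Your overall strategy (store one $(1+\varepsilon)$-approximate value per edge, use APSP + a DSO, hit long replacement paths by random pivots) is the right shape, and the combinatorial/algebraic split via Bernstein--Karger vs.\ Gu--Ren is exactly what the paper does. But the execution diverges from the paper in ways that introduce real problems.

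The central idea you are missing is that one scale suffices. Since removing an edge can only increase distances, $\diam(G\,{-}\,e)\ge\diam(G)$ for every $e$, so an additive slack of $\varepsilon\diam(G)$ is automatically within $\varepsilon\diam(G\,{-}\,e)$. The paper therefore samples a \emph{single} pivot set $B$ with $|B|=\Otilde\bigl(n/(\varepsilon\diam(G))\bigr)$, initializes every $D[e]$ to $\diam(G)$, sets $D[e]=\infty$ for strong bridges, and then, for each pivot $x\in B$, each target $t$, and each edge $e$ on the shortest path $P(x,t)$ in $G$, updates $D[e]\gets\max\{D[e],\,d(x,t,e)\}$, finally adding $\varepsilon\diam(G)$. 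This costs $O\bigl(|B|\cdot n\cdot\diam(G)\bigr)=\Otilde(n^2/\varepsilon)$ DSO queries with no localization or amortization argument at all. The small-diameter regime $\varepsilon\diam(G)=O(\log n)$ is handled separately by exact computation over all pairs in the same budget.

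Your multi-scale scheme is not only unnecessary, it breaks the time bound as written. With scales $L_i=(1+\varepsilon)^i$ there are $\Theta(\log n/\varepsilon)$ scales, and each scale independently costs $|B_i|\cdot nL_i=\Otilde(n^2/\varepsilon)$, giving $\Otilde(n^2/\varepsilon^2)$ in total; there is no geometric decay across scales to amortize against, since the per-scale cost is flat. (Switching to genuine dyadic scales $L_i=2^i$ would fix the count but contradicts what you wrote.) Separately, the ``localization to vertices of $S_e$ within distance $O(L_i)$'' step, which you yourself flag as the unproved heart of the argument, has to cope with two corner cases: (i) the witness $v^*$ may satisfy $d(p,v^*)=d(p,v^*,e)$, in which case $v^*\notin S_e$ and your re-query misses it unless you also fold in the unfaulted eccentricity as a baseline, which your formula for $\widehat D(e)$ does not; and (ii) the disconnection case ($\diam(G\,{-}\,e)=\infty$) is not detected by restricted subtree queries at all --- the paper computes strong bridges up front precisely to sidestep this. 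None of these issues is fatal in isolation, but together with the superfluous scale hierarchy they make the proposal considerably more fragile than the one-scale argument, and the time claim as stated does not hold.
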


Note that, for any constant $0 < \varepsilon < 1/3$, our
combinatorial $(1+\varepsilon)$-approximate combinatorial FDO 
from \autoref{thm:single_failure} is near-optimal with respect to all parameters.
The $\Theta(m)$ space is near-optimal by \autoref{thm:space_lower_bound_single},
the query time is $\Otilde(1)$, and the $\Otilde(mn)$ preprocessing time comes within sub-polynomial
factors of the conditional lower bound by Henzinger~et~al.~\cite{HenzingerL0W17}.
Furthermore, when fast matrix multiplication is permitted, 
our algebraic preprocessing algorithm is even faster on dense graphs.
However, our FDO is randomized.

\textbf{\textsf{Question 3 - derandomization.}}
Can the construction of \autoref{thm:single_failure} be derandomized 
in the same asymptotic running time?
We answer this question partially in that we derandomize the approximation part of our algorithm.
When combined with the DSO of Bernstein and Karger~\cite{BeKa09}
this gives a deterministic combinatorial FDO.
For the derandomization, we adapt the framework of Alon, Chechik, and Cohen~\cite{AlonChechikCohen19CombinatorialRP}.
We identify a set of $O(n^{3/2})$ critical paths one needs to hit,
and show how to compute them in $O(mn)$ time.
It is then enough to let the folklore greedy algorithm compute a hitting set in
 $\Otilde(n^2)$ time.

It remains an open problem whether one can derandomize the algebraic approach,
whose randomization stems solely from the DSO by Gu and Ren~\cite{GuRen21ConstructingDSO_ICALP}.

\begin{restatable}{theorem}{derandomization}
\label{thm:derandomization}
	For every unweighted directed graph and $\varepsilon > 0$,
	there exists a deterministic combinatorial $(1+\varepsilon)$-approximate FDO
	that takes $O(m)$ space and has $\Otilde(mn + n^2/\varepsilon)$ preprocessing time
	and $O(1)$ query time.
\end{restatable}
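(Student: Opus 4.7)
The goal is to replace every random choice in the construction underlying \autoref{thm:single_failure} by a deterministic one, without exceeding the stated time budget. Two sources of randomness have to be removed: the combinatorial distance sensitivity oracle of Gu and Ren used inside the FDO, and the pivot-sampling step used to handle long shortest paths in the $(1+\varepsilon)$-approximation argument. For the first, I would plug in the deterministic combinatorial DSO of Bernstein and Karger, whose $\Otilde(mn)$ preprocessing time and $\Otilde(1)$ query time already fit the theorem's bounds, and on top of which the FDO stores only $O(m)$ additional words. What remains is the pivot step.

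The randomized pivot step samples $\Otilde(\sqrt{n}/\varepsilon)$ vertices that, with high probability, hit every shortest path in $G$ of some threshold length $L = \Theta(\sqrt{n}/\varepsilon)$; these hits are what allows the FDO to estimate long post-failure distances within factor $(1+\varepsilon)$. To derandomize it, I would follow the framework of Alon, Chechik, and Cohen. Fix a canonical shortest-path tree rooted at each vertex (for instance by tie-breaking on lexicographic order of vertex sequences), and for every ordered pair $(u,v)$ whose canonical $u$-to-$v$ path has more than $L$ vertices, record a single initial segment of exactly $L$ vertices. A standard counting argument bounds the number of such distinct segments by $O(n^{3/2})$; these ``critical paths'' can be listed in $\Otilde(mn)$ time by running a shortest-path computation from every source and reading segments off the canonical trees. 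The folklore greedy set-cover algorithm, applied to this family of $O(n^{3/2})$ sets of size $L$, then deterministically outputs a hitting set of size $\Otilde(\sqrt{n}/\varepsilon)$ in $\Otilde(n^2)$ total time, so the pivot-selection contributes only $\Otilde(mn + n^2/\varepsilon)$ to the overall preprocessing.

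The main obstacle is to identify the correct notion of ``critical path'' and to prove that hitting this particular family suffices to preserve the $(1+\varepsilon)$ approximation for every single edge failure $F=\{e\}$. The subtlety is that the FDO probes replacement paths in $G\,{-}\,e$, not shortest paths of $G$, so one must show that covering canonical paths of the unperturbed graph is enough, and that the cost of a single removed edge can be absorbed into an additive $\varepsilon\,\diam(G\,{-}\,F)$ error; this is where the specific structure of the approximation scheme from \autoref{thm:single_failure} enters. Once this combinatorial lemma is in place, the $O(n^{3/2})$ counting bound and the greedy hitting-set analysis are largely routine, and combining the resulting deterministic pivot set with the Bernstein--Karger DSO yields the claimed combinatorial FDO with $O(m)$ space, $\Otilde(mn + n^2/\varepsilon)$ preprocessing, and $O(1)$ query time.
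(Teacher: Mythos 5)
Your high-level plan is the right one and matches the paper in spirit: use the Bernstein--Karger DSO as the deterministic backbone (the paper's combinatorial variant of \autoref{thm:single_failure} already does this; Gu--Ren is only used in the algebraic version, so the only genuine source of randomness in the combinatorial FDO is the pivot sampling), and derandomize the pivots via the Alon--Chechik--Cohen greedy hitting-set framework on a small family of critical paths. The parameters you quote, $O(n^{3/2})$ critical paths and a greedy computation in $\Otilde(n^2)$ time, are also consistent with the paper.

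However, you explicitly flag the crux of the argument--that a failed edge $e$ may destroy exactly the prefix of the canonical shortest path you planned to hit, so that hitting paths of the unperturbed graph $G$ need not place a pivot close to $s$ in $G-e$--and then defer it as ``a combinatorial lemma'' to be supplied. That lemma is in fact the content of this section, and your proposed critical-path family (initial $L$-segments of canonical shortest-path trees of $G$) does not by itself suffice: if $e$ lies on the segment from $s$, the pivot $x$ on it may have $d(s,x,e)$ much larger than $\ell$, or $x$ may even be unreachable from $s$ in $G-e$. The paper gets around this in two steps you do not have. First, it reframes what must be hit: not a long replacement path, but merely the ball of radius $\varepsilon\diam(G)$ around $s$ in $G-e$ (a pivot $x$ with $d(s,x,e)\le\varepsilon\diam(G)$ combined with the triangle inequality already gives the required lower bound on $D[e]$, so $x$ need not lie on any replacement path). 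Second, it builds a \emph{two-level} family of critical paths, generalizing ACC to dependent sets: fix one in-tree $T_{\text{in}}(r)$ toward an arbitrary root $r$, collect in $\mathcal{P}$ the length-$\ell$ prefixes (with $\ell=\min\{\varepsilon\diam(G),\sqrt{n}\}$) of the $s$-to-$r$ paths, and then, for each $P\in\mathcal{P}$ and each non-bridge edge $e\in E(P)$, compute the in-tree in $G-e$ and add the corresponding backup prefix to $\mathcal{P}_e$. Because $T_{\text{in}}(r)$ has only $n-1$ edges and each start vertex $s$ contributes at most $\ell+1$ paths, there are $O(n\ell)=O(n^{3/2})$ paths in total, computable in $O(mn)$ time, and Lemma~\ref{lem:deterministic_HS} shows any hitting set for this family gives a valid pivot at replacement distance at most $\varepsilon\diam(G)$ from every $s$ in every strongly connected $G-e$. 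Without this two-level construction and the ball-hitting reformulation, your argument does not close; also note your threshold $L=\Theta(\sqrt{n}/\varepsilon)$ does not match the paper's $\ell=\min\{\varepsilon\diam(G),\sqrt{n}\}$ and would over- or under-shoot the intended pivot-set size depending on the diameter.
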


\textbf{\textsf{Question 4 - space vs.\ approximation trade-off for multiple failures.}}
Finally, we consider the case of multiple edge failures and examine similar questions.
What is a the minimum size for an exact, respectively, approximate, diameter oracle in the presence of up to $f$ edge failures?
We again prove an information-theoretic lower bound and show that for arbitrary finite stretch $\sigma$, any $\sigma$-approximation diameter oracle requires $\Omega(fn)$ bits of space,
at least if the oracle can be queried also with sets $F$ that contain non-edges.

\begin{restatable}{theorem}{spacelowerboundmultiple}
\label{thm:space_lower_bound_multiple}
	Suppose $f < n$.
	Any $f$-FDO with finite stretch that can be queried also for non-edges	must take $\Omega(fn)$ bits of space on graphs with $n$ vertices.
\end{restatable}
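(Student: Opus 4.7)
We argue by an information-theoretic encoding. The idea is to exhibit a family $\mathcal{F}$ of input graphs on $n$ vertices with $\log_2|\mathcal{F}| = \Omega(fn)$, and to show that for every pair of distinct members $G_H, G_{H'} \in \mathcal{F}$ there exists a failure set $F$ of size at most $f$, possibly containing non-edges, on which the true diameters $\diam(G_H\,{-}\,F)$ and $\diam(G_{H'}\,{-}\,F)$ differ as finite versus $\infty$. Since the oracle has finite stretch $\sigma$, the returned value $\widehat{D}$ equals $\infty$ exactly when $\diam(\cdot)$ does; hence the stored representations of any two members of $\mathcal{F}$ must themselves be distinct, giving the claimed space lower bound.

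\textbf{Construction of $\mathcal{F}$.} Set $k = \min(f+1,\lfloor n/2\rfloor)$ and note $k\geq 2$ as long as $f \geq 1$ and $n$ is sufficiently large. Partition the vertex set as $V = \{s\} \cup A \cup B$ with $|A| = k$ and $|B| = n - k - 1$. For every bipartite graph $H \subseteq A \times B$ in which $N_H(b) \neq \emptyset$ for each $b \in B$, let $G_H$ be the undirected graph with edge set $\{\{s,a\} : a \in A\} \cup H$. Every such $G_H$ is connected with diameter at most $4$. The number of admissible $H$ is $(2^k - 1)^{n-k-1} \geq 2^{(k-1)(n-k-1)}$. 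When $k = f+1$, this exponent is $f(n-f-2) = \Omega(fn)$ using $f\leq n/2$; when $k = \lfloor n/2 \rfloor$, it is $\Omega(n^2) = \Omega(fn)$ since $f < n$.

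\textbf{Distinguishing query.} For any pair $(a_i, b_j) \in A \times B$ I consider the query set $F_{i,j} = \{(a_{i'}, b_j) : i' \in [k],\, i' \neq i\}$, which has exactly $k-1\leq f$ elements and is legal by hypothesis (some of its elements are non-edges whenever $a_{i'}\notin N_H(b_j)$). In $G_H\,{-}\,F_{i,j}$ the unique edge of $G_H$ that can still be incident to $b_j$ is $(a_i, b_j)$. If $(a_i, b_j) \in H$, vertex $b_j$ remains connected via $a_i$ to $s$ and thereby to the untouched remainder of the graph, so $\diam(G_H\,{-}\,F_{i,j}) \leq 4$; if $(a_i, b_j) \notin H$, vertex $b_j$ is isolated and $\diam(G_H\,{-}\,F_{i,j}) = \infty$. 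Whenever $H \neq H'$ they differ on some pair $(a_i, b_j)$, so the query $F_{i,j}$ forces the oracle to output a finite value for one graph and $\infty$ for the other, completing the reduction.

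\textbf{Main obstacle.} The heart of the argument is to compress a single bit of the bipartite structure $H$ so that it can be extracted by a single query within the failure budget $f$: isolating one $b_j$ costs exactly the $k-1$ deletions of its potential $A$-neighbors other than $a_i$, while the encoding volume scales as $(k-1)\cdot|B|$. The choice $k=\min(f+1,\lfloor n/2\rfloor)$ is the sweet spot; the cap at $\lfloor n/2\rfloor$ prevents $|B|$ from becoming sublinear when $f$ approaches $n$, preserving the $\Omega(fn)$ encoding in all regimes. The side condition $N_H(b)\neq\emptyset$ is imposed so that every $G_H$ is a valid connected input graph and the encoding does not collapse into an ``all-infinite'' case; it shrinks $|\mathcal{F}|$ only by a factor $(1-2^{-k})^{|B|}$, which is $\Theta(1)$ already for $k\geq 2$ and thus inconsequential on the logarithmic scale. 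Finally, the hypothesis that the oracle accepts queries with non-edges is essential, for it allows $F_{i,j}$ to serve as a single uniform template independent of which pairs actually lie in $H$.
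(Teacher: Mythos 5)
Your argument is correct and follows the same high-level strategy as the paper (incompressibility: encode a $2^{\Omega(fn)}$-size family of graphs and distinguish its members by whether a suitable $\le f$-element failure set disconnects the graph, which any finite-stretch FDO must reveal via finite vs.\ infinite output), but your construction is genuinely different and arguably cleaner. The paper's graph is a star centered at $c$ on all $n$ vertices with an additional ``banded'' graph among leaves $v_1,\dots,v_{fk}$ (edges $\{v_i,v_j\}$ for $|i-j|\le f/2$); the distinguishing query must delete the star edge $\{c,v_i\}$ alongside the band edges at $v_i$, and the argument is split into cases for $f$ even, $f$ odd $\ge 3$, and $f=1$. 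In your construction the information-carrying edges form a bipartite graph between a small side $A$ of size $k=\min(f+1,\lfloor n/2\rfloor)$ and a large side $B$; since the star from $s$ only touches $A$, it never needs to be broken, a single uniform query $F_{i,j}$ of size $k-1\le f$ isolates $b_j$ exactly when $(a_i,b_j)\notin H$, and the choice of $k$ handles every $f<n$ without case analysis. One small imprecision in a side remark: the factor $(1-2^{-k})^{|B|}$ by which the condition $N_H(b)\neq\emptyset$ shrinks $|\mathcal{F}|$ is not $\Theta(1)$ for constant $k$ (for $k=2$ it is $(3/4)^{|B|}\to 0$); what you actually use, and what is correct, is the bound $(2^k-1)^{|B|}\ge 2^{(k-1)|B|}$, which already yields $\Omega(fn)$ bits, so the lapse is harmless to the proof.
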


We develop an efficient $f$-FDO whose space requirement almost matches 
the lower bound.
Our result adapts and improves a construction by Bilò et al.~\cite{BGLP16}.
Note that we use the $\Otilde$-notation to suppress polylogarithmic factors in $n$.

\begin{restatable}{theorem}{multiplefailures}
\label{thm:multiple_failures}
	For every undirected graph with non-negative edge weights,
	there exists a deterministic combinatorial $(f\,{+}\,2)$-approximate $f$-FDO 
	that takes $\Otilde(fn)$ space and has 
	$\Otilde(fm)$ preprocessing time and $\Otilde(f^2)$ query time.
\end{restatable}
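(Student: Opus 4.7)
The plan is to reduce fault-tolerant diameter approximation to a single-source problem and then retune the multiple-edge-failure shortest-path-tree construction of Bilò et al.~\cite{BGLP16}. First I would fix an arbitrary root $r \in V(G)$ and observe that, in the undirected graph $G-F$, the triangle inequality gives $d_{G-F}(u,v) \leq d_{G-F}(r,u) + d_{G-F}(r,v) \leq 2\,\ecc_{G-F}(r)$ for every pair $u,v$, while $\ecc_{G-F}(r) \leq \diam(G-F)$ holds trivially. Hence a $\sigma$-approximation of $\ecc_{G-F}(r)$ yields a $2\sigma$-approximation of $\diam(G-F)$, and it is enough to build an $f$-fault-tolerant single-source eccentricity oracle at $r$ with stretch $\tfrac{f+2}{2}$.

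The second step is to construct such an oracle by adapting Bilò et al.~\cite{BGLP16} to non-negative edge weights. At the top level I compute a Dijkstra tree $T$ rooted at $r$ and annotate each internal node with the maximum depth in its subtree. For each tree edge $e \in T$ I store a residual Dijkstra tree accounting for the failure of $e$, recursing to depth $f$ and amortising the size of each level against the $O(n)$ edges of $T$; this yields overall size $\widetilde{O}(fn)$ and preprocessing time $\widetilde{O}(fm)$. A query $F$ with $|F| \leq f$ is answered by descending the hierarchy $\mathcal{H}$ through the at most $f$ stored replacement trees triggered by the failed edges of $F$, performing ancestor lookups and balanced-BST retrievals in $\widetilde{O}(f^2)$ time, and maintaining the running max of the annotated depths. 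The returned value $\widehat{E}$ satisfies $\ecc_{G-F}(r) \leq \widehat{E} \leq \tfrac{f+2}{2}\,\ecc_{G-F}(r)$, so outputting $\widehat{D} = 2\widehat{E}$ delivers $\diam(G-F) \leq 2\,\ecc_{G-F}(r) \leq 2\widehat{E} \leq (f+2)\,\ecc_{G-F}(r) \leq (f+2)\,\diam(G-F)$, proving both the upper-bound property and the stretch.

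The main obstacle is the quantitative stretch analysis of the replacement hierarchy. Each failed edge triggers one rerouting through a stored replacement subpath, and the challenge will be to show that the $f$ reroutings accumulate only to an overall multiplicative factor of $\tfrac{f+2}{2}$ over $\ecc_{G-F}(r)$, rather than the $2^f$ or $2f+1$ blow-ups that a naive iterated triangle inequality would produce. Achieving this requires exploiting that every stored subpath is a genuine shortest path in its own residual subgraph, together with a careful charging scheme that attributes each additive detour to exactly one failure, so that errors add rather than multiply. The remaining ingredients---size by summing $O(n)$ across the $f$ levels, preprocessing by $f$ rounds of Dijkstra, and query time by bounded-depth navigation using balanced-BST indexing---are then routine adaptations of the analysis in~\cite{BGLP16}.
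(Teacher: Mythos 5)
Your high-level plan—reuse the single-source $f$-fault oracle of Bilò et al.\ and convert it into a diameter oracle—is the same starting point as the paper. But the specific reduction you choose is one the paper explicitly considers and discards, and you yourself flag the central step (the stretch analysis) as unresolved. That step is where the whole content of the theorem lives, and the paper's resolution does not go through eccentricity at all.

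Concretely, you propose to first build an $\tfrac{f+2}{2}$-approximate single-source eccentricity oracle and then double. The naive way to estimate $\ecc_{G-F}(r)$ from Bilò et al.'s $(2f{+}1)$-approximate replacement distances gives stretch $2f+1$ for the eccentricity, hence $2(2f{+}1)$ for the diameter; the paper notes precisely this bound and then says a \emph{better query algorithm} is needed. Your plan must therefore improve the single-source eccentricity stretch from $2f+1$ all the way to $\tfrac{f+2}{2}$, and you offer only a sketch ("a careful charging scheme that attributes each additive detour to exactly one failure"). There is no indication how such a charging scheme would yield the $\tfrac{f+2}{2}$ factor, and in fact the eccentricity route has a built-in loss that is hard to recover: even if one manages to show $\widehat{E} \le (f{+}1)\ecc_{G-F}(r)$ (which is what the paper's internal estimates give if you plug in $x=s$), doubling yields only $2(f{+}1)\diam(G-F)$, not $(f{+}2)\diam(G-F)$.

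The paper's actual argument bypasses eccentricity. It re-weights edges by $w'(e)=d(s,x)+w(e)+d(y,s)$, reconstructs at query time a minimum spanning forest $T_F$ of $G-F$ under $w'$, extracts a scalar $\Delta = \max_i\bigl(w'(e_i)-d(s,r_i)\bigr)$ from the $k\le f$ reconnecting edges, and outputs $\widehat{D}=f\Delta + 2\max_t d(s,t)$. The two halves of the stretch bound are then obtained \emph{separately and directly against the diameter}: the MSF exchange argument gives $\Delta \le \diam(G-F)$, and $\max_t d(s,t) \le \diam(G-F)$ trivially, so $\widehat D \le (f{+}2)\diam(G-F)$. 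The lower bound $\widehat D \ge \diam(G-F)$ comes from a path-by-path decomposition through $T_F$ in which each of the $k$ reconnecting edges contributes at most $\Delta$ and the two tree-ends contribute at most $\max_t d(s,t)$ each. None of this factors through an eccentricity estimate, and the crucial Lemma~\ref{lem:msf_of_H_minus_F_is_also_a_msf_of_G_minus_F} (that $T_F$ is a minimum spanning forest of $G-F$ under $w'$) is what makes the exchange argument work. Your description of the Bilò et al.\ data structure as "a hierarchy of replacement Dijkstra trees" built to depth $f$ also does not match their construction: there is a single shortest-path tree $T$ at $s$, a reweighting, and a query-time MSF computation—not $f$ levels of nested trees.

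In short: the reduction to single-source is the right instinct, but the eccentricity detour wastes a factor you cannot afford, and the part you identify as the hard part—getting the additive $f\Delta + 2\max_t d(s,t)$ bound and proving both inequalities—is precisely what the proof has to supply and your proposal does not.
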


Real-world networks are often described as having a small diameter,
dubbed as the ``small world property''~\cite{kleinberg2000navigation}.
Many graph models used to analyze social and communication networks
have provable polylogarithmic guarantees on the diameter,
e.g.\ Chung-Lu graphs~\cite{chung2002average},
hyperbolic random graphs~\cite{friedrich2018diameter}, or the preferential attachement model~\cite{Hofstad16RandomGraphs}. 
We show that on graphs with low diameter one can swap approximation for query time even for multiple failures, while still retaining efficient preprocessing time and a low space requirement
To achieve this, we combine fault-tolerant trees that where introduced by Chechik et al.~\cite{ChCoFiKa17} with the random graphs of Weimann and Yuster~\cite{WY13}.

\begin{restatable}{theorem}{lowdiameter}
\label{thm:multiple_failures_low_diameter}
	Let $f$ be a positive integer and $\delta = \delta(n,m) > 0$ a real number.
	For every undirected unweighted graph with diameter at most $n^{\delta/f}/(f{+}1)$,
	there exists a randomized combinatorial $f$-FDO 
	that takes $O(n^{2+\delta})$ space, has $O(2^f)$ query time, and
	with high probability 
	$\Otilde(f mn^{1+\delta} + f \nwspace n^{2+(2-1/f)\delta})$ preprocessing time.
	Using fast matrix multiplication instead, one can construct the FDO w.h.p.\ in time
	$\Otilde(fn^{\omega+\delta} + f \nwspace n^{2+(2-1/f)\delta})$.
\end{restatable}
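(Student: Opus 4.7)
The plan is to build, for every ordered pair $(s,t)$ of vertices, a \emph{fault-tolerant tree} (FT-tree) in the style of Chechik, Cohen, Fiat, and Kaplan. A node of such a tree is labelled by a failure set $F'$ accumulated on the path from the root, stores a single shortest $s$\emDash$t$ path $P$ in $G\,{-}\,F'$, and has one child per edge of $P$ obtained by adding that edge to $F'$. Because the diameter is at most $D = n^{\delta/f}/(f{+}1)$, each stored path uses at most $D$ edges, so the tree has depth at most $f$ and at most $D^f \le n^\delta/(f{+}1)^f$ nodes. Summing over all $n^2$ ordered pairs gives the $O(n^{2+\delta})$ space bound, while a query with failure set $F$ is answered by the standard recursive descent that visits at most $2^f$ nodes: at each node, if $F$ is disjoint from the stored path we output its length, otherwise we recurse into the child corresponding to an edge of $F$ lying on the path.

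The challenging part is filling in the $\Theta(n^{2+\delta})$ stored paths. I would blend the FT-tree construction with the random-subgraph sampling of Weimann and Yuster, building the trees level by level. For each depth $k \le f$, I would draw $\Otilde(n^{(1-1/f)\delta})$ random subgraphs by independently retaining each vertex with a suitable probability, then run unweighted APSP on each sample; since every candidate replacement path has at most $D \le n^{\delta/f}$ edges, a Chernoff and union bound shows that w.h.p.\ every shortest $s$\emDash$t$ path in $G\,{-}\,F'$ survives in at least one sample, and its length is read off from the corresponding APSP table. Summed across the $f$ levels this contributes the $\Otilde(fn^{2+(2-1/f)\delta})$ term. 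The few node-entries whose optimal path is not captured by any sample are handled by fault-tolerant BFS in $G\,{-}\,F'$ from $\Otilde(n^\delta)$ pivot sources per vertex, giving the $\Otilde(fmn^{1+\delta})$ term; replacing BFS by Seidel's or Zwick's matrix-multiplication-based APSP for unweighted graphs turns $mn$ into $n^\omega$, yielding the algebraic variant.

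The main obstacle is the correctness analysis of the sampling step. I need to argue that for every FT-tree node the sample-based APSP tables and the pivot BFS runs together recover the exact distance in $G\,{-}\,F'$. The per-path miss probability is controlled because paths are short (at most $D$ edges) and the sample size is calibrated to drive the failure probability below $n^{-c}$ for any desired constant $c$; a union bound over the $n^{2+\delta}$ tree nodes then yields the global high-probability guarantee. Balancing the sample size against the per-sample APSP cost fixes the exponent $(2{-}1/f)\delta$, and the linear factor $f$ reflects iterating the construction through the $f$ levels of the trees.
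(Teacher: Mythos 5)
Your high-level plan coincides with the paper's: FT-trees in the style of Chechik et al.\ populated via the Weimann--Yuster random-subgraph trick, with $O(n^{\delta})$ nodes per tree and $O(n^{2+\delta})$ overall. However, there is a concrete gap in the query-time analysis, and two smaller slips in the construction.

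The central issue is the claimed $O(2^f)$ query time. You propose answering a query $F$ by a recursive descent that visits $O(2^f)$ nodes \emph{per tree}, but the diameter of $G\,{-}\,F$ is $\max_{s,t} d(s,t,F)$, so the naive descent must be repeated in all $n^2$ trees $T(s,t)$, giving $O(n^2 2^f)$, not $O(2^f)$. The paper's crucial extra step is to aggregate \emph{during preprocessing}: it records, for each failure set $F'$ that appears in any tree node, the maximum of $d(s,t,F')$ over all pairs $(s,t)$ in a hash table $H$ keyed by $F'$. A query $F$ then only looks up $H[F']$ for all $O(2^f)$ subsets $F'\subseteq F$ and takes the maximum. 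Without some such aggregation, the stated query bound is off by a factor of $n^2$, and the theorem as you argue it does not hold.

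Two further points to tighten. First, you assert each stored path has at most $D = n^{\delta/f}/(f{+}1)$ edges ``because the diameter is at most $D$,'' but the stored paths are shortest paths in $G\,{-}\,F'$, whose lengths are not bounded by $\diam(G)$. The paper invokes the result of Afek et al.\ that any shortest path in $G\,{-}\,F'$ is a concatenation of at most $|F'|{+}1$ shortest paths of $G$, which yields the bound $(f{+}1)\diam(G)\le n^{\delta/f}$; the resulting fan-out $n^{\delta/f}$ and height $f$ still give $O(n^\delta)$ nodes, so your final count is right but the justification is missing. Second, for the DSO subroutine you sample vertices, but for edge failures the Weimann--Yuster construction samples each \emph{edge} out with probability $n^{-\delta/f}$, and needs $\Otilde(fn^{\delta})$ subgraphs (not $\Otilde(n^{(1-1/f)\delta})$), since a set $F$ with $|F|=f$ is excluded from a given sample with probability $n^{-\delta}$. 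With the correct sample count, running APSP in all samples already costs $\Otilde(fmn^{1+\delta})$ combinatorially (or $\Otilde(fn^{\omega+\delta})$ algebraically), making your auxiliary pivot-BFS fallback unnecessary; the $\Otilde(fn^{2+(2-1/f)\delta})$ term is the total cost of the $O(n^{2+\delta})$ DSO queries, each of which can be made to run in $\Otilde(fn^{(1-1/f)\delta})$ time by intersecting per-edge ``excluded in'' index sets rather than scanning all samples.
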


If the diameter is in fact polylogarithmic and the number of failures is bounded by 
$f = o(\log n/\log\log n)$, we obtain the following corollary.

\begin{corollary}
\label{cor:low_diameter}
	Let $f = o(\log n/\log\log n)$.
	For every undirected graph with polylogarithmic diameter, there is an
	$f$-FDO that takes $n^{2+o(1)}$ space and has $n^{o(1)}$ query time.
	It can be preprocessed in time $mn^{1+o(1)}$,
	or algebraically in time $n^{\omega+o(1)}$.
	If $f$ is constant, the preprocessing times
	are $\Otilde(mn)$, resp.\ $\Otilde(n^{\omega})$,
	with $\Otilde(n^2)$ space and $\Otilde(1)$ query time.
\end{corollary}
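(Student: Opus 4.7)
The plan is to derive the corollary directly from Theorem~\ref{thm:multiple_failures_low_diameter} by a careful choice of the parameter~$\delta$. Let $c$ be a constant with $\diam(G) \le \log^c n$. The hypothesis $\diam(G) \le n^{\delta/f}/(f{+}1)$ of that theorem is equivalent to $\delta \ge f(\log(f{+}1) + c\log\log n)/\log n$. I would set $\delta$ to (essentially) this lower bound and then track how each parameter in the conclusion of the theorem behaves in the two regimes of $f$ appearing in the corollary.

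First suppose $f = o(\log n/\log\log n)$. Since $\log(f{+}1) = O(\log\log n)$, the chosen $\delta$ satisfies $f\log(f{+}1) = o(\log n)$ and $fc\log\log n = o(\log n)$, so $\delta = o(1)$ and $n^\delta = n^{o(1)}$. The same assumption on $f$ also gives $2^f = n^{o(1)}$ and $f = n^{o(1)}$. Plugging these bounds into Theorem~\ref{thm:multiple_failures_low_diameter}, the space becomes $n^{2+\delta} = n^{2+o(1)}$ and the query time is $O(2^f) = n^{o(1)}$. For the combinatorial preprocessing time, $\Otilde(f m n^{1+\delta} + f n^{2+(2-1/f)\delta}) = mn^{1+o(1)}$, where the second summand is absorbed using $m \ge n{-}1$; the same computation with $n^\omega$ in place of $mn$ yields $n^{\omega+o(1)}$ in the algebraic case.

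For the constant-$f$ regime, the requirement on $\delta$ simplifies to $\delta = \Theta(\log\log n/\log n)$, which makes $n^\delta$ merely polylogarithmic in $n$. Substituting into the bounds of Theorem~\ref{thm:multiple_failures_low_diameter} then gives $\Otilde(n^2)$ space, $O(1)$ query time, $\Otilde(mn)$ combinatorial preprocessing, and $\Otilde(n^\omega)$ algebraic preprocessing, exactly as claimed. No real obstacle is anticipated: the whole argument is careful accounting of exponents, and the threshold $f = o(\log n/\log\log n)$ is precisely the regime in which both $2^f$ and $n^\delta$ remain $n^{o(1)}$, which is what keeps the space and query bounds below $n^{2+o(1)}$ and $n^{o(1)}$, respectively.
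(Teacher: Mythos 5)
Your proposal is correct and follows the intended derivation: instantiate Theorem~\ref{thm:multiple_failures_low_diameter} with the minimal admissible $\delta = f(\log(f{+}1) + c\log\log n)/\log n$, then verify that $n^\delta$, $2^f$, and $f$ are all $n^{o(1)}$ when $f = o(\log n/\log\log n)$ (and polylogarithmic, constant, and constant, respectively, when $f = O(1)$), absorbing the $f\nwspace n^{2+(2-1/f)\delta}$ summand into the leading term via $m \ge n-1$ and $\omega \ge 2$. This is exactly the bookkeeping the corollary asks for; nothing is missing.
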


\subsection{Related Work}

We briefly review previous work on distance sensitivity oracles and diameter computation.

\textbf{\textsf{Distance sensitivity oracles.}}
Distance oracles for all-pairs distances were introduced 
in a seminal paper by Thorup and Zwick~\cite{ThorupZ01}. Demetrescu et al.~\cite{DeThChRa08}
extended the notion of distance oracles to the fault-tolerant setting in which either an edge or a vertex of a graph can fail (i.e., distance sensitivity oracles or DSOs).
They showed that it is possible to preprocess a directed weighted graph in $\Otilde(mn^2)$ time
to compute a data-structure of size $O(n^2 \log n)$ capable of answering distance queries in constant time. Bernstein and Karger~\cite{BeKa09} improved the preprocessing time to
$\Otilde(mn)$ and Duan and Zhang~\cite{DuanZhang17ImprovedDSOs} reduced the space to $O(n^2)$, which is asymptotically optimal.

Duan and Pettie~\cite{DP09} considered the more involved case of
two failures and presented an oracle with $O(n^2 \log^3 n)$ size, $O(\log n)$ query time
and polynomial construction time.
Chechik et al.~\cite{ChCoFiKa17} presented a DSO of size $O(n^{2+o(1)})$
that supports up to $o(\log n/\log \log n)$ edge failures and guarantees a stretch of $(1+\epsilon)$, for every constant $\epsilon > 0$. 
The approach has been recently extended to also handle vertex failures by 
Duan, Gu, and Ren~\cite{DuanGR21}.

The construction of DSOs have also been considered in the approximate regime~\cite{ChCoFiKa17}.
Algebraic algorithms are known to improve the preprocessing times,
if one is willing to employ fast matrix multiplication (for e.g., see~\cite{GuRen21ConstructingDSO_ICALP, ChCo20} and the references therein). 

\textbf{\textsf{Diameter computation.}}  The fastest known combinatorial algorithms (up to polylogarithmic factor) for both solving the all-pairs shortest paths (APSP) problem and the diameter problem, are the trivial ones with $\Otilde(mn)$ running time. There is extensive research on developing faster approximate APSP algorithms \cite{BaKa_SIGCOMP10, CoZw_JOA01, Kavitha12}, as well as faster approximation algorithms for the diameter \cite{Chechik:2014,Roditty16ApproxDiamEncycAlg}. For special classes of graphs, for example planar graphs, efficient exact algorithms for computing the diameter are known~\cite{GawrychowskiKMS18}.

\section{Preliminaries}
\label{sec:prelims}

We let $G=(V,E)$ denote the (possibly directed) base graph on $n$ vertices and $m$ edges.
We tacitly assume that $G$ is (strongly) connected, i.e., $m = \Omega(n)$.
For a graph $H$,
we denote by $V(H)$ the set of its vertices, and by $E(H)$ its edges.
The \emph{(closed) neighborhood} of a vertex $v \in V(H)$ is the set 
$N[v] = \{u \in V(H) \mid \{v,u\} \in E(H)\} \cup \{v\}$.
Let $P$ be a path in $H$, its \emph{length} $|P|$ is the number of its edges.
For any two vertices $x,y \in V(P)$, $P[x..y]$ is the subpath of $P$ from $x$ to $y$. 
For $s,t \in V(H)$, the \emph{distance} $d_H(s,t)$ 
is the minimum length of an $s$-$t$-paths in $H$;
if $s$ and $t$ are disconnected, we set $d_H(s,t) =+ \infty$.
We drop the subscript when talking about the base graph $G$.
The \emph{eccentricity} of $s$ is $\ecc(s,H) = \max_{t \in V(H)} d_H(s,t)$ and
the \emph{diameter} is $\diam(H) = \max_{s \in V(H)} \ecc(s,H)$.
Any graph distance can be stored in a single machine word on $O(\log n)$ bits.
Unless explicitly stated otherwise, we measure the space complexity in the number of words.
For a collection $F \subseteq \binom{V(H)}{2}$ of 2-sets of vertices (edges or non-edges),
let $H-F$ be the graph obtained from $H$ by removing all edges in $F$
(graph $H$ is not altered if $F \cap E(H) =\emptyset$).
A \emph{replacement path} $P_H(s,t,F)$ is a shortest path from $s$ to $t$ in $H-F$.
Its length $d_H(s,t,F) = |P_H(s,t,F)|$ is the \emph{replacement distance}.
The \emph{fault-tolerant diameter} of $H$ with respect to $F$ is the diameter of $H\,{-}\,F$.

For a positive integer $f$, an \emph{$f$-fault-tolerant diameter oracle} ($f$-FDO) for the graph $G$
is a data structure that reports, upon query $F$ with $|F| \le f$, the value $\diam(G\,{-}\,F)$.
For any $\sigma = \sigma(n,m,f) \ge 1$, such an oracle is $\sigma$\emph{-approximate},
or has \emph{stretch} $\sigma$,
if it answers a query $F$ with a value $\widehat D$ such that
$\diam(G\,{-}\,F) \le \widehat D \le \sigma \cdot \diam(G\,{-}\,F)$.
In case of a single failure, we write FDO for 1-FDO and abbreviate $F = \{e\}$ to $e$.
An \emph{$f$-distance sensitivity oracle} ($f$-DSO) reports,
upon query $(s,t,F)$ with $|F| \le f$, the replacement distance $d(s,t,F)$.

\subsection{(Mostly) Known FDOs for Single Edge Failures}
\label{subsec:prelims_trivial}

The first folklore FDO handles single edge failures in unweighted (directed or undirected) graphs.
It has also been observed in~\cite{HenzingerL0W17}.
The DSO of Bernstein and Karger~\cite{BeKa09}
constructible in $\Otilde(mn)$ time and
is able to report in constant time the exact distance of any pair of vertices 
in the presence of a single edge failure.
With this one can construct the FDO by explicitly computing all the eccentricities $\ecc(v,G-e)$,
for every vertex $v$ and every edge $e$ of $G$, in $O(n^3)$ time.
For a fixed  vertex $v$, the $m$ values $\ecc(v,G\,{-}\,e)$ can be obtained in $O(n^2)$ time as follows.
First compute a shortest paths tree $T_v$ of $G$ rooted at $v$.
For each edge $e$ that is not in $T_v$, we have that $\ecc(v,G\,{-}\,e)=\ecc(v,G)$.
For the tree-edges $e$ in $T_v$, we use the DSO to compute $\ecc(v,G\,{-}\,e)$
which is the maximum distance from $v$ to any other vertex in $G\,{-}\,e$. 
Therefore, $\ecc(v,G\,{-}\,e)$ can be computed by performing $n-1$ queries, 
as there are $n - 1$ edges in $T_v$, we need $O(n^2)$ time.
The fault-tolerant diameter $\diam(G-e)$ is the maximum of the $\ecc(v,G-e)$,
it can be stored in $O(m)$ space with one entry for each edge $e$.

The second folklore FDO can only be used for undirected edge-weighted graphs. The FDO has stretch $2$ and
uses the fact that the diameter of the graph is intimately related to the eccentricity of \emph{any} vertex.
For an arbitrary $v$, we have that $\ecc(v,G) \leq \diam(G) \leq 2\ecc(v,G)$ as, by the triangle inequality, we can bound the distance between any two vertices $u,u' \in V$ by $d_G(u,u') \leq d_G(u,v)+d_G(v,u') \leq 2\ecc(v,G)$. 
The FDO again computes a shortest paths tree $T$ rooted at a fixed source $v$
and stores an array of length $n-1$, corresponding to the edges of $T$.
For every such edge $e$, one computes and stores $2\ecc(s,G-e)$.
When queried with edge $e$, the FDO returns the stored value or,
if $e$ is not in the tree, the value $2\ecc(s,G)$.
The size of this FDO is $O(n)$.

A maybe lesser-known way of building FDOs is via spanners.
For any $\sigma > 0$, we say that a subgraph $H$ of $G$ is a \emph{spanner} of stretch $\sigma$ if, for every two vertices $s,t$ of $G$, we have $d_H(s,t) \leq \sigma d_G(s,t)$.
For every positive integer $k$, it is known how to construct a spanner $H$ of $G$ such that (a) $H$ has a stretch of $2k-1$ and (b) the size of $H$ is $O(n^{1+1/k})$~\cite{Althofer93SparseSpanners}. Observe that for every edge $e=\{u,v\}$
that is in $G$ but not in $H$, we have $d(u,v,e) \leq 2k-1$.
We now describe how spanners can be used to construct another easy oracle for undirected unweighted graphs whose stretch guarantee depends on both $k$ and the inverse of $\diam(G)$.
This implies that the oracle already performs quite well for large-diameter graphs.

We construct such a spanner oracle with parameter $k$ by first computing a spanner that satisfies (a) and (b). Then, we associate the value $\diam(G-e)$ to each edge $e$ in the spanner $H$ and build a dictionary in which we store information about the edges of the spanner together with the corresponding associated values. 
Consider a query of edge $e$. 
If $e \in E(H)$ the we return the value associated with $e$; otherwise, we return $\diam(G)+2(k-1)$.
The proof of the next lemma is deferred to \autoref{app:prelims_proof}.

\begin{restatable}{lemma}{spanneroracles}\label{lm:spanneroracle}
For every positive integer $k$, the spanner oracle with parameter $k$ has $O(n^{1+1/k})$ size, a constant query time, and a stretch of $1+2(k-1)/\diam(G)$. 
\end{restatable}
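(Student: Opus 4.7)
The plan is to verify the three claimed properties in turn. The size bound and query time fall out immediately: the spanner $H$ contributes only $O(n^{1+1/k})$ edges by property (b), each associated in the dictionary with a single stored value $\diam(G\,{-}\,e)$, and any standard hash-based dictionary answers a lookup in constant time. So the real content lies in the stretch analysis.

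For the stretch I would split into the two cases the oracle itself distinguishes. When $e \in E(H)$, there is nothing to show: the oracle returns the exact value $\diam(G\,{-}\,e)$, giving stretch $1$. The interesting case is $e \notin E(H)$, where the returned value is $\widehat D = \diam(G) + 2(k{-}1)$. The key observation to exploit here is that $H$, being a subgraph of $G$ that does not contain $e$, is also a subgraph of $G\,{-}\,e$; the $(2k{-}1)$-spanner property then forces the two endpoints $u,v$ of the failed edge $e=\{u,v\}$ to be connected in $G\,{-}\,e$ by a path of length at most $2k{-}1$.

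From that, I would derive the detour bound $\diam(G\,{-}\,e) \le \diam(G) + 2(k{-}1)$. Given two vertices $x,y$, take a shortest $x$-$y$-path $P$ in $G$; if $P$ avoids $e$ it survives in $G\,{-}\,e$ with the same length, and otherwise I would locally replace the single occurrence of $e$ by the spanner detour of length at most $2k{-}1$, yielding an $x$-$y$-walk in $G\,{-}\,e$ of length at most $d_G(x,y)-1+(2k{-}1) = d_G(x,y) + 2(k{-}1)$. Taking the maximum over $x,y$ delivers the detour bound and in particular $\widehat D \ge \diam(G\,{-}\,e)$, certifying the upper-bound part of the stretch definition.

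It then remains to bound the multiplicative blow-up, which will use the trivial monotonicity $\diam(G) \le \diam(G\,{-}\,e)$ to factor out $\diam(G\,{-}\,e)$ from both terms of $\widehat D$:
\[
    \widehat D \;=\; \diam(G) + 2(k{-}1) \;\le\; \diam(G\,{-}\,e) + 2(k{-}1)\,\frac{\diam(G\,{-}\,e)}{\diam(G)} \;=\; \diam(G\,{-}\,e)\!\left(1 + \frac{2(k{-}1)}{\diam(G)}\right),
\]
which is exactly the claimed stretch. The main conceptual step is the detour bound; everything else is routine. The one subtlety worth flagging is that the detour argument genuinely requires $H \subseteq G\,{-}\,e$ and not merely $H \subseteq G$, which is precisely why the oracle distinguishes spanner edges from non-spanner edges in the query protocol.
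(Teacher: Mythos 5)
Your proof is correct and tracks the paper's argument closely: both reduce the stretch bound to the detour observation $d(u,v,e)\le 2k-1$ for $e\notin E(H)$, derive $\diam(G\,{-}\,e)\le\diam(G)+2(k{-}1)$ by splicing in the spanner detour, and then use $\diam(G)\le\diam(G\,{-}\,e)$ to obtain the multiplicative stretch. Your final inequality is a slightly rearranged but equivalent form of the paper's division step, so this is essentially the same proof.
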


The result of \Cref{lm:spanneroracle} already implies the existence of sparse FDOs of $o(m)$ size and of stretch $\sigma < 3/2$ for sufficiently dense graphs with diameter strictly larger than 4. This does not contradict the lower bound of \Cref{thm:space_lower_bound_single}, but allows us to conclude that strong lower bounds on the size of FDOs for unweighted undirected graphs can only hold when the diameter of the input graph is bounded by a small constant.

\section{Single Edge  Failures}
\label{sec:single_failure}

First, we treat single edge failures, $f = 1$.
In this section, we assume the base graph $G$ to be directed
and present an $(1+\varepsilon)$-approximate fault-tolerant diameter oracle
with space $O(m)$ and $O(1)$ query time.
We give two variants, one is deterministic and combinatorial,
the other randomized and algebraic.
We then show that the space requirement is optimal up to the size of the machine word.

\subsection{An $(1+\varepsilon)$-approximate FDO for Single Failures}
\label{subsec:single_failure_algorithm}

We construct here the approximate FDO, thereby proving \autoref{thm:single_failure}.
Suppose we know for each $s,t \in V$ some shortest path $P(s,t)$ in $G$ 
and additionally have access to a distance sensitivity oracle
that, for any edge $e$, reports in constant time the replacement distances $d(s,t,e)$ whenever needed.
Clearly, $d(s,t,e)$ differs from the original graph distance only if $e$ is on $P(s,t)$.
To determine the diameters of \emph{all} the graphs $G\,{-}\,e$,
it is thus enough to query the DSO only for the edges on the shortest paths,
which can be done in time $O(n^2 \cdot \diam(G))$.
We use approximation to avoid the cubic running time in case of a large diameter.
For this, we randomly sample a small set $B$ of so-called \emph{pivots} and prove that it is enough to compute
the replacement distances only between pairs from $B \times V$,
instead of all pairs of vertices.
Subsequently, we derandomize the pivot selection.

We fill in the details
starting with the APSP computation in $G$ and the preprocessing of the DSO.
The combinatorial version uses a breath-first search from every vertex
and the DSO of Bernstein and Karger~\cite{BeKa09}, taking total time $\Otilde(mn)$.
Alternatively, compute APSP algebraically
and use the randomized DSO by Gu and Ren~\cite{GuRen21ConstructingDSO_ICALP}.\footnote{
	The DSO by Gu and Ren~\cite{GuRen21ConstructingDSO_ICALP} is not path-reporting;
	if it were, we would not have to compute APSP.
	The fastest path-reporting algebraic DSO was given by Ren~\cite{Ren20,Ren20_arxiv}
	and can be constructed in time $O(n^{2.7233})$ on directed graphs,
	respectively in time $O(n^{2.6865})$ on undirected graphs.
}
APSP is computable in time $O(n^{2.575})$ on unweighted directed graphs
with a variant of Zwick's algorithm~\cite[Corollary~4.5]{Zwick02DirectedAPSP},
this is in turn dominated by the $O(n^{2.5794})$ preprocessing time 
of the DSO~\cite{GuRen21ConstructingDSO_ICALP}.
After these computation, the distances $d(s,t)$, shortest paths $P(s,t)$ in $G$, 
and the replacement distances $d(s,t,e)$ are available to us (w.h.p., in the randomized case)
with a constant query time per distance/path edge.

From here on out, the process for both variants is the same.
Our fault-tolerant diameter oracle also allows non-edges to be queried, for which we return 
the original diameter $\diam(G)$.
To account for this, we store all edges in a static dictionary of size $O(m)$ 
that allows for worst-case constant look-up times after an $\Otilde(m)$
preprocessing~\cite{AlNa96,HagerupMiltersenPagh01DeterministicDictionaries}.\footnote{%
	The weak non-uniformity mentioned in~\cite{HagerupMiltersenPagh01DeterministicDictionaries},
	i.e., the need of compile-time constants depending on the word size,
	only holds if this size is $\omega(\log n)$,
	which is not the case for us.}

Now fix a parameter $\varepsilon > 0$ for the approximation, possibly even depending on $m,n$.
We initialize an array $D$ indexed by the edges of $G$,
all its cells hold the value $\diam(G)$.
Assume first that $\varepsilon \cdot \diam(G) = O(\log n)$.
For any two vertices $s,t$ and edge $e$ on the shortest path $P(s,t)$,
we update $D[e]$ to the maximum of the previous value and $d(s,t,e)$.
This takes $O(n^2 \diam(G)) = \Otilde(n^2/\varepsilon)$ time. 
After all updates, the entry $D[e]$ 
stores the \emph{exact} fault-tolerant diameter $\diam(G\,{-}\,e)$ (possibly w.h.p.).
For $\varepsilon \cdot \diam(G) = \omega(\log n)$,
we first give a randomized $(1{+}\varepsilon)$-approximation
and later derandomize it in \Cref{subsec:single_failure_derandomization}.
This yields the deterministic combinatorial algorithm of~\autoref{thm:derandomization}.
The remaining use of randomness in the algebraic variant is due to the DSO by Gu and Ren~\cite{GuRen21ConstructingDSO_ICALP}.

To guard for the case that the failure of $e$ disconnects the graph,
we compute all \emph{strong bridges} of $G$, 
that is, edges whose removal increases the number of strongly connected components,
in time $O(m)$ with the algorithm by Italiano, Laura, and Santaroni~\cite{Italiano12FindingStrongBridges}.
For each strong bridge $e$, we set $D[e] = \infty$.
To compute the other entries,
we construct the set $B \subseteq V$ of pivots by randomly sampling every vertex
independently with probability $C (\log n)/ (\varepsilon \diam(G))$
for a sufficiently large constant $C > 0$.
A simple calculation using Chernoff bounds shows that
$|B| = \Otilde(n/(\varepsilon \diam(G)))$ w.h.p.
Moreover, with high probability for all $s, t \in V$ and $e \in E$ 
such that $\varepsilon \diam(G) < d(s,t,e) < \infty$,
there exists a replacement path from $s$ to $t$ that avoids $e$
and additionally contains a pivot from $B$.
See~\cite{GrandoniVWilliamsFasterRPandDSO_journal,RodittyZwick12kSimpleShortestPaths} for details.
We update the entries of $D$ in the same fashion as above,
but now only use the (directed) distance $d(x,t,e)$ for all pivots $x \in B$ and vertices $t \in V$.
In the end, we add $\varepsilon \diam(G)$ to the value in $D[e]$.
The array $D$ is computable in time $O(m + n \nwspace |B| \diam(G)) = \Otilde(n^2/\varepsilon)$.

We verify that $D[e]$ is an $(1{+}\varepsilon)$-approximation 
of the fault-tolerant diameter $\diam(G\,{-}\,e)$.

\begin{restatable}{lemma}{singlefailureupperbound}
\label{lem:single_failure_upper_bound}
	We have $\diam(G\,{-}\,e) \le D[e] \le (1{+}\varepsilon) \diam(G\,{-}\,e)$ w.h.p.
\end{restatable}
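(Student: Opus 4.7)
The plan is to split on the cases distinguished by the algorithm. If $e$ is a strong bridge, then $D[e] = \infty = \diam(G\,{-}\,e)$, so both inequalities hold trivially. In the branch $\varepsilon\diam(G) = O(\log n)$ I claim that $D[e] = \diam(G\,{-}\,e)$ \emph{exactly}. Indeed, pairs $(s,t)$ with $e \notin P(s,t)$ satisfy $d(s,t,e) = d(s,t) \le \diam(G)$, so they never exceed the initial value of $D[e]$; pairs with $e \in P(s,t)$ update $D[e]$ using $d(s,t,e)$. Maximising over both types recovers $\max_{s,t \in V} d(s,t,e) = \diam(G\,{-}\,e)$.

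In the pivot branch, the upper bound is direct. Every quantity ever written into $D[e]$ before the final additive term is either $\diam(G)$ or some $d(x,t,e)$; both are bounded by $\diam(G\,{-}\,e)$, and adding $\varepsilon\diam(G) \le \varepsilon\diam(G\,{-}\,e)$ yields $D[e] \le (1{+}\varepsilon)\diam(G\,{-}\,e)$. For the lower bound I fix a pair $(s^*,t^*)$ realising $\diam(G\,{-}\,e)$ and split on whether $\diam(G\,{-}\,e) \le (1{+}\varepsilon)\diam(G)$. If yes, initialisation alone gives $D[e] \ge (1{+}\varepsilon)\diam(G) \ge \diam(G\,{-}\,e)$. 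Otherwise $d(s^*,t^*,e) > \varepsilon\diam(G)$, and I invoke the standard pivot-hitting argument behind the sampling rate $C\log n/(\varepsilon\diam(G))$: for $C$ large enough, w.h.p.\ every replacement path of length greater than $\varepsilon\diam(G)$ contains a pivot among its first $\varepsilon\diam(G)$ vertices, and a union bound over the $\operatorname{poly}(n)$ relevant triples makes this simultaneous. Thus some $x \in B$ lies on $P_G(s^*,t^*,e)$ with $d(s^*,x,e) \le \varepsilon\diam(G)$, giving $d(x,t^*,e) \ge \diam(G\,{-}\,e) - \varepsilon\diam(G)$ by subtracting along the replacement path.

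The main obstacle is that the algorithm only updates $D[e]$ using $d(x,t^*,e)$ when $e$ lies on the \emph{stored} shortest path $P(x,t^*)$ in $G$, rather than unconditionally. I resolve this via a dichotomy on whether $d(x,t^*,e) > d(x,t^*)$. If the strict inequality holds, then $e$ must lie on every shortest $x$-to-$t^*$ path in $G$, hence on $P(x,t^*)$; the update fires and yields $D[e] \ge d(x,t^*,e) + \varepsilon\diam(G) \ge \diam(G\,{-}\,e)$. If instead $d(x,t^*,e) = d(x,t^*) \le \diam(G)$, combining with $d(x,t^*,e) \ge \diam(G\,{-}\,e) - \varepsilon\diam(G)$ forces $\diam(G\,{-}\,e) \le (1{+}\varepsilon)\diam(G)$, contradicting the standing assumption of the second subcase. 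Both branches of the dichotomy close the argument.
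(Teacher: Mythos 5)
Your proof is correct and follows the same overall strategy as the paper (upper bound is direct, lower bound via the pivot--hitting argument), but you add a step that the paper's proof silently skips. The paper asserts ``the entry $D[e]$ is also updated using the pivot $x$'' without verifying that $e$ actually lies on the \emph{stored} shortest path $P(x,t^*)$, which is the trigger condition for the update. Your dichotomy on whether $d(x,t^*,e) > d(x,t^*)$ is exactly the missing justification: strict inequality forces $e$ onto every shortest $x$--$t^*$ path and hence onto $P(x,t^*)$, while equality collapses into the first case of your split via $\diam(G-e) \le (1{+}\varepsilon)\diam(G)$. Choosing the threshold $(1{+}\varepsilon)\diam(G)$ instead of the paper's weaker $\varepsilon\diam(G)$ is what makes this closure clean, since the initialization already contributes $\diam(G) + \varepsilon\diam(G)$.

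One small imprecision: you assert that w.h.p.\ ``\emph{every} replacement path of length greater than $\varepsilon\diam(G)$ contains a pivot among its first $\varepsilon\diam(G)$ vertices.'' The sampling lemma only guarantees that for each triple $(s,t,e)$ \emph{some} replacement path avoiding $e$ has a pivot in its prefix; there may be exponentially many replacement paths and not all of them are hit. The paper handles this explicitly by fixing $y$ at distance $\varepsilon\diam(G)$ along $P(s,t,e)$, invoking the hitting set against some shortest $s$--$y$ path $P'$ in $G-e$, and splicing $P'$ onto the rest of $P(s,t,e)$ to get a pivot-bearing replacement path of the same length. Your subsequent ``subtracting along the replacement path'' only works on the path that is actually hit, so you should read $P_G(s^*,t^*,e)$ as that particular path rather than an arbitrary one. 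This is a presentational slip rather than a logical gap, since the conclusion $d(x,t^*,e) \ge \diam(G-e) - \varepsilon\diam(G)$ is what you need and it does hold for the spliced path.
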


\begin{proof}
	We can assume that $G\,{-}\,e$ is strongly connected 
	as otherwise $D[e] = \infty = \diam(G\,{-}\,e)$.
	The upper bound follows from
	$D[e] = \max_{x \in B, t \in V} d(x,t,e) + \varepsilon \diam(G) 
		\le (1{+}\varepsilon) \diam(G\,{-}\,e)$.
	
	The main part consists of showing the lower bound $D[e] \ge \diam(G\,{-}\,e)$.
	The idea is to prove the existence of a pivot $x \in B$
	and vertex $t \in V$ whose replacement distance underestimates the fault-tolerant diameter by
	at most an additive term $\varepsilon \diam(G)$,
	which we offset when computing $D[e]$.
	If $\diam(G\,{-}\,e)  \le \varepsilon \diam(G)$ (which can only happen for $\varepsilon \ge 1$),
	the lower bound holds vacuously as we have $D[e] \ge \varepsilon \diam(G)$.

	Let thus vertices $s,t \in V$
	be such that $d(s,t,e) = \diam(G\,{-}\,e) > \varepsilon \diam(G)$.
	Since $G\,{-}\,e$ is strongly connected
	the diameter is finite and realized by some replacement path $P(s,t,e)$.
	In particular, we have $|P(s,t,e)| > \varepsilon \diam(G)$.
	Let $y$ be the unique vertex on $P(s,t,e)$
	with $d(s,y,e) = \varepsilon \diam(G)$.
	Recall that w.h.p.\ the set $B$ hits \emph{some} shortest path $P'$ 
	from $s$ to $y$ that avoids $e$.
	The path $P'$ is not necessarily equal to the subpath $P(s,t,e)[s..y]$,
	but they have the same length $d(s,y,e)$.
	Substituting $P'$ for $P(s,t,e)[s..y]$ therefore guarantees
	a replacement path from $s$ to $t$ that (w.h.p.) has a pivot $x \in B$
	on its prefix of length $\varepsilon \diam(G)$.
	For notational convenience, we use $P(s,t,e)$ to also denote this particular path.
	
	The replacement distance from pivot $x$ to target $t$ satisfies
	$d(x,t,e) = |P(s,t,e)[x..t]| = |P(s,t,e)| - |P(s,t,e)[s..x]|
		\ge d(s,t,e) - \varepsilon \diam(G)$.
	The entry $D[e]$ is also updated using the pivot $x$,
	whence $D[e] \ge d(x,t,e) +  \varepsilon \diam(G) \ge d(s,t,e) = \diam(G\,{-}\,e)$.
\end{proof}

\subsection{Derandomization}
\label{subsec:single_failure_derandomization}

For the randomized combinatorial FDO, we had a
preprocessing time of $\Otilde(mn + n^2/\varepsilon)$.
The underlying APSP computation and the DSO are deterministic.
We now derandomize the approximation part
in the same asymptotic running time,
proving \autoref{thm:derandomization}.
In \autoref{lem:single_failure_upper_bound}, we used that the set $B$
intersects at least one long replacement path from $s$ to $t$ exactly.
We argue that it is in fact enough to hit
the set of all vertices with distance at most $\varepsilon \diam(G)$ from $s$
in each strongly connected $G\,{-}\,e$.
The pivot $x$ does not need to be on any replacement path.
The only assertion of \autoref{lem:single_failure_upper_bound}
that is possibly in doubt is the lower bound $D[e] \ge \diam(G\,{-}\,e)$.
Let again $s$ and $t$ be such that $d(s,t,e) = \diam(G\,{-}\,e)$ and let $x \in B$
be a pivot with $d(s,x,e) = d_{G\,{-}\,e}(s,x) \le \varepsilon \diam(G)$.
Whenever $G\,{-}\,e$ is strongly connected,
a replacement path $P(x,t,e)$ exists and, by the triangle inequality, we have 
$d(x,t,e) \ge d(s,t,e) - d(s,x,e) \ge d(s,t,e) - \varepsilon \diam(G)$.
The claim follows.

For the derandomization, we adopt the framework of Alon, Chechik and Cohen~\cite{AlonChechikCohen19CombinatorialRP}.
This involves efficiently finding a small set of critical paths
such that hitting them ensures to hit each $(\varepsilon \diam(G))$-ball 
in the strongly connected $G\,{-}\,e$.
If the critical paths are both short enough and few in numbers,
it is then enough to compute the hitting set via the folklore greedy algorithm.
In \cite{AlonChechikCohen19CombinatorialRP}, it was sufficient to give a single set of critical paths.
We generalize this to multiple sets, where the later-defined sets depend on the paths in the former.

Set $\ell  = \min \{\varepsilon \diam(G), \sqrt{n}\}$
and let $r$ be an arbitrary vertex in $G$.
We compute the in-tree $T_{\text{in}}(r)$, containing the shortest paths in $G$ leading to $r$,
with breath-first search.
In the set $\mathcal{P}$, we collect, for each vertex $s$ with $d(s,r) > \ell$,
the path of $T_{\text{in}}(r)$ starting in $s$ and having length $\ell$.
Let $P \in \mathcal{P}$ be a path with start vertex $s$ 
and let $e \in E(P)$ be such that it is not a strong bridge.
We compute the in-tree $T_{\text{in},e}(r)$ in $G\,{-}\,e$ rooted in $r$.
Note that $s$ has distance $d(s,r,e) \ge d(s,r) > \ell$ from the root in the tree.
We add the corresponding path to the set $\mathcal{P}_e$.
The original in-tree $T_{\text{in}}(r)$ contains only $n-1$ edges,
so all trees can be computed in total time\footnote{%
    For a single source, there are \emph{randomized} algorithms known
    that compute the trees faster~\cite{ChechikCohen19SSRP_SODA,ChMa19, GW12}.
}
$O(mn)$.
Moreover, there are at most $\ell+1$ paths with starting vertex $s$.
In total, we thus have $O(n\ell)$ paths each of length $\ell$.
A greedy algorithm computes a hitting set $B$ for all paths in the $\mathcal{P}$ and $\mathcal{P}_e$.
It iteratively selects the vertex that is contained in the most yet unhit paths,
it terminates in time $\Otilde(n\ell^2) = \Otilde(n^2)$ 
and produces a set of $|B| = \Otilde(n/\ell) = \Otilde(n/(\varepsilon \diam(G))$ pivots,
see~\cite{AlonChechikCohen19CombinatorialRP,King99FullyDynamicAPSP}.
We used the definition $\ell  = \min \{\varepsilon \diam(G), \sqrt{n}\}$
for both estimates.
Finally, we add the root $r$ to the set $B$ 
to cover all paths in the trees that are shorter than $\ell$.

\begin{lemma}
\label{lem:deterministic_HS}
	For each vertex $s \in V$ and edge $e$ such that $G\,{-}\,e$ is strongly connected,
	there exists a pivot $x \in B$ with $d(s,x,e) \le \varepsilon \diam(G)$.
\end{lemma}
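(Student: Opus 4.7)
The plan is to prove the claim by a case analysis on the replacement distance $d(s,r,e)$, and within the harder branch on the $G$-distance $d(s,r)$.

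If $d(s,r,e) \le \varepsilon \diam(G)$, the root $r$ itself, which was explicitly added to $B$, is a valid pivot and there is nothing to prove. So assume $d(s,r,e) > \varepsilon \diam(G) \ge \ell$. Let $Q_e$ be the path from $s$ to $r$ in $T_{\text{in},e}(r)$, which exists because $G\,{-}\,e$ is strongly connected, and let $P'_s$ denote its length-$\ell$ prefix. Every vertex of $P'_s$ lies at distance at most $\ell \le \varepsilon \diam(G)$ from $s$ in $G\,{-}\,e$ via the prefix itself, so it is enough to exhibit a vertex of $P'_s$ that belongs to $B$.

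Now split on $d(s,r)$. If $d(s,r) > \ell$, the path $P_s \in \mathcal{P}$ exists by construction. I would then split once more on whether $e$ lies in $E(P_s)$. If yes, the second-level set $\mathcal{P}_e$ contains precisely the path $P'_s$ extracted from $T_{\text{in},e}(r)$, so the greedy hitting set $B$ hits $P'_s$ directly. If not, then $P_s$ is itself a path of length $\ell$ in $G\,{-}\,e$ starting at $s$, and $B$ hits $P_s$; every vertex on $P_s$ is within distance $\ell$ of $s$ in $G\,{-}\,e$, giving the desired pivot.

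The remaining case $d(s,r) \le \ell < d(s,r,e)$ is the principal obstacle. In this regime $s$ itself contributes no path to $\mathcal{P}$, and $e$ must lie on the short $T_{\text{in}}(r)$-path from $s$ to $r$. The plan is to walk along $P'_s$ until the first vertex $s^*$ with $d(s^*,r) > \ell$ is reached, and then invoke the argument of the preceding paragraph for the pair $(s^*, e)$: this yields a pivot $x \in B$ within distance $\ell$ of $s^*$ in $G\,{-}\,e$, and hence within distance at most $2\ell$ of $s$ along $P'_s$. The hard part will be to keep this combined distance within the target $\varepsilon\diam(G)$; the factor-of-two loss can be absorbed by halving $\ell$ in the construction (equivalently, rescaling $\varepsilon$ by a constant), which does not alter any asymptotic bound. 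One also has to verify that such an $s^*$ always exists on $P'_s$ when $d(s,r,e) > \varepsilon\diam(G)$, using that $P'_s$ is a shortest path in $G\,{-}\,e$ forced to escape the small $T_{\text{in}}(r)$-neighborhood of $r$ in $G$.
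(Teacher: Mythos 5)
Your first two branches — $d(s,r,e)\le\varepsilon\diam(G)$ (pivot $r$) and $d(s,r)>\ell$ with $d(s,r,e)>\varepsilon\diam(G)$ (pivot on $P_s\in\mathcal P$ if $e\notin E(P_s)$, else on $P'_s\in\mathcal P_e$) — are correct and match the paper's argument; splitting on $d(s,r,e)$ rather than $d(s,r)$, as you do, is in fact the cleaner formulation.

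The regime $d(s,r)\le\ell<d(s,r,e)$, however, cannot be closed by walking along $P'_s$ to a vertex $s^*$ with $d(s^*,r)>\ell$: such an $s^*$ need not exist. Although $P'_s$ spans $\ell$ edges in $G\,{-}\,e$, every vertex on it can still have $G$-distance at most $\ell$ to $r$, because those short distances may all run through the very edge $e$ that was removed. Concretely, take $e=(s,r)$, let $s=v_0\to v_1\to\cdots\to v_\ell\to r$ be the only way out of $s$ in $G\,{-}\,e$, and add back-edges $v_i\to s$ for every $i\ge 1$ (pad the graph elsewhere so that $\varepsilon\diam(G)=\ell$). Then $d(s,r,e)=\ell+1>\ell$, yet $d(v_i,r)\le 2\le\ell$ for all $i$, so $P'_s$ contains no $s^*$ and halving $\ell$ does not help. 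Note that your intuition that $P'_s$ is ``forced to escape the small $T_{\text{in}}(r)$-neighborhood of $r$'' is precisely what fails: escape is forced in $G\,{-}\,e$, not in $G$. This is not only a flaw in your plan — the paper's own proof handles this configuration by asserting that $d(s,r)\le\varepsilon\diam(G)$ already makes $r$ a valid pivot, which silently identifies $d(s,r)$ with $d(s,r,e)$, the very equality that fails here. You have put your finger on a genuine soft spot. Repairing it requires enlarging the preprocessing (for instance, adding to $\mathcal P_e$ the $\ell$-prefix of the $T_{\text{in},e}(r)$-path for every $s$ whose full $T_{\text{in}}(r)$-path to $r$ contains $e$, not just those with $d(s,r)>\ell$), rather than a purely local argument on $P'_s$.
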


\begin{proof}
	If $d(s,r) \le \varepsilon \diam(G)$, we are done.
	Otherwise, let $P$ be the prefix of length $\ell$
	of the path from $s$ to $r$ in the tree $T_{\text{in}}(r)$,
	whence $P \in \mathcal{P}$.
	If $P$ does not contain the edge $e$, it also exists in $G\,{-}\,e$ and the corresponding 
	pivot $x \in B \cap V(P)$ satisfies $d(s,x,e) = d(s,x) \le \ell \le \varepsilon \diam(G)$.
	If $P$ contains $e$, then let instead $P' \in \mathcal{P}_e$
	be the length-$\ell$ prefix of the path from $s$ to $r$ in $T_{\text{in},e}(r)$.
	Again, $x \in B \cap V(P')$ implies $d(s,x,e) \le \varepsilon \diam(G)$.
\end{proof}

\subsection{Space Lower Bounds}
\label{subsec:single_failure_lower_bounds}  

Finally, we prove \Cref{thm:space_lower_bound_single} thus showing that the space requirement
of the FDOs in \Cref{thm:single_failure,thm:derandomization} is near-optimal
provided that the stretch is $\sigma = \sigma(m,n) < 3/2$, 
that is, $\varepsilon < 1/2$.
This even holds for the simpler task of computing the diameter in undirected graphs.
For better exposition, 
we first show that any diameter oracle with such a stretch
requires $\Omega(n^2)$ space on at least one $n$-vertex graph,
which is, however, only tight for dense graphs.
We then sparsify the construction to for an $\Omega(m)$ bound for graphs with $m$ edges.
Any \mbox{$\sigma$-approximate} FDO solves the promise problem of distinguishing,
for each edge $e$, 
whether $G\,{-}\,e$ has diameter $2$ or $3$.

\begin{restatable}{lemma}{lowerbounddense}
\label{lem:space_lower_bound_dense}
	There is a graph $G$ on $n$ vertices 
	such that $G\,{-}\,e$ has diameter $2$ or $3$ for any $e \in E$.
	Any data structure that decides
	which one is the case must take $\Omega(n^2)$ bits of space.
\end{restatable}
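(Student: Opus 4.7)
The plan is to use an encoding (counting) argument. I will construct a family $\mathcal{F} = \{G_B\}_B$ of graphs on $n$ vertices, indexed by bit matrices $B$ of size $\Theta(n) \times \Theta(n)$, with the following properties: (a) every $G_B$ satisfies the promise $\diam(G_B - e) \in \{2, 3\}$ for all $e \in E(G_B)$; (b) the graphs share a common set $E^\star$ of $\Theta(n^2)$ edges; and (c) for every pair $B \ne B'$, some $e \in E^\star$ satisfies $\diam(G_B - e) \ne \diam(G_{B'} - e)$. Once such a family is in hand, the lower bound follows: any data structure answering the promise problem must assign distinct representations to distinct $G_B \in \mathcal{F}$, since otherwise two graphs in $\mathcal{F}$ would force identical oracle answers on every edge of $E^\star$, contradicting (c). Hence the oracle uses at least $\log_2 |\mathcal{F}| = \Omega(n^2)$ bits on some $G_B$.

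For the construction, I partition $V$ into three equally sized groups $L = \{l_1, \ldots, l_k\}$, $M = \{m_1, \ldots, m_k\}$, and $R = \{r_1, \ldots, r_k\}$ with $k = \lfloor n/3 \rfloor$. Fixing the complete bipartite graph between $L$ and $R$ (this is $E^\star$) together with the perfect matching $\{\{l_i, m_i\} : i \in [k]\}$, I let the bit matrix $B \in \{0,1\}^{k \times k}$ specify the bipartite edges between $M$ and $R$: the edge $\{m_l, r_j\}$ is present in $G_B$ iff $B[l, j] = 1$. I restrict attention to those $B$ satisfying a mild non-degeneracy condition (every row and every column has at least two $1$s, and every pair of rows shares a $1$ in some column). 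A routine Chernoff or union-bound estimate shows that a $(1 - o(1))$-fraction of bit matrices qualify, so $|\mathcal{F}| = 2^{\Omega(k^2)} = 2^{\Omega(n^2)}$.

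The distinguishing property holds because $L$ and $R$ are both independent in $G_B$. Thus the only possible common neighbors of $l_i$ and $r_j$ lie in $M$, and a vertex $m_l$ qualifies iff $l = i$ (by the matching) and $B[i,j] = 1$ (by the bit encoding). Consequently, removing the edge $\{l_i, r_j\} \in E^\star$ gives $\diam(G_B - \{l_i, r_j\}) = 2$ precisely when $B[i, j] = 1$, and $3$ otherwise (a length-3 alternative $l_i - r_{j'} - l_{i'} - r_j$ exists for $k \ge 2$). Querying the FDO on every edge of $E^\star$ therefore recovers all of $B$, establishing (c). A straightforward case analysis over the remaining edge types $\{l_i, m_i\}$ and $\{m_l, r_j\}$, using the two-ones-per-row and row-intersection conditions, verifies $\diam(G_B - e) \le 3$ always, giving (a).

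The main technical obstacle is calibrating the base graph to allow a single-bit resolution per query: I need enough edges to keep $\diam(G_B) = 2$ and $\diam(G_B - e) \le 3$ for every $e$, yet few enough that each probe edge $\{l_i, r_j\}$ has exactly one controllable common neighbor. The $L$-$M$ matching is the key device that confines each query's sensitivity to a single bit of $B$; without it, several vertices of $M$ could simultaneously serve as common neighbors of the pair $l_i, r_j$, merging many bits into one coarse answer and weakening the lower bound from $\Omega(n^2)$ to $O(n)$.
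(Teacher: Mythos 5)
Your proof is correct and achieves the same $\Omega(n^2)$ bound via the same encoding philosophy as the paper---embed a $\Theta(n)\times\Theta(n)$ bit matrix into the graph and probe its entries by removing biclique edges---but the gadget itself is genuinely different. The paper partitions the vertex set into four \emph{cliques} $A,B,C,D$, links $A,B,C$ pairwise by perfect matchings (so $a_i b_i c_i$ is a triangle), joins $B$ to $D$ by a biclique, and records $X_{i,j}$ as the presence of $\{c_i,d_j\}$; removing $\{b_i,d_j\}$ then tests whether $N[a_i]$ and $N[d_j]$ intersect, which happens precisely in $c_i$ iff $X_{i,j}=1$. Your three-group construction $L,M,R$ of \emph{independent sets}, with an $L$-$M$ matching and an $L$-$R$ biclique, plays the same role but is structurally lighter: removing $\{l_i,r_j\}$ isolates the unique candidate common neighbor $m_i$, whose adjacency to $r_j$ is exactly the bit $B[i,j]$. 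The trade-off is that your gadget is valid only for the non-degenerate matrices you carve out; for the all-zeros matrix, say, $M$ becomes a set of leaves and $\diam(G_B)$ jumps to $4$, breaking the promise $\diam(G-e)\in\{2,3\}$. The paper's cliques sidestep this and let the argument go through for \emph{every} $X$ unconditionally, at the cost of a bulkier construction. Since a $(1-o(1))$-fraction of matrices satisfies your restriction, $\log_2|\mathcal{F}|=\Omega(n^2)$ still holds, so the bound is unaffected. One small remark: the per-column condition you impose is never actually invoked in the case analysis---the two-ones-per-row and row-intersection conditions already suffice to keep every $\diam(G_B-e)\le 3$.
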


\begin{proof}
	We give an incompressibility argument by encoding
	any binary $(n/4) \nwspace{\times}\nwspace (n/4)$ matrix $X$ in the fault-tolerant diameters of $G$.
	No data structure can store this in $o(n^2)$ bits.
	The construction is illustrated in \autoref{fig:lower_bound_dense}.
	
	Without loosing generality, $n$ is divisible by $4$,
	we can add up to three dummy vertices if needed.
	Split the vertex set equally into four groups $A$, $B$, $C$, $D$ and
	let $a_1, \dots, a_{n/4}$ be an arbitrary numbering of the elements of $A$,
	same with the other groups.
	All groups are made into cliques and, for all $i \in [n/4]$,
	we make $a_i$, $b_i$, and $c_i$ into a triangle.
	This results in matchings for the pairs $(A,B)$, $(B,C)$, and $(A,C)$, respectively.
	We further add edges so as to make $(B,D)$ into a biclique.
	To encode the matrix $X$, we introduce the edge $\{c_i, d_j\}$ if and only if $X_{i,j} = 1$.
	
	The graph $G$ indeed has diameter $2$ (even if $X$ is the all-zeros matrix).
	Vertices $a_i$ and $b_j$ are joined by the path $(a_i,a_j,b_j)$\emDash{}which
	by symmetry also holds for the other pairs of groups among $A$, $B$, or $C$\emDash{}and
	and the vertices $a_i$ or $c_i$ are connected to $d_j$ 
	via the paths $(a_i,b_i,d_j)$ or $(c_i,b_i,d_j)$, respectively.
	Removing any edge increases the diameter by at most $1$
	since for any $e = \{u,v\}$ there exists a common neighbor in $w \in N[u] \cap N[v]$.
	This is clear inside the (bi-)cliques.
	For the matching edges, say $e = \{a_i,b_i\}$, we have $w = a_j$, $j\neq i$.
	Finally, for $e = \{c_i,d_j\}$ (if it exists), we have $w = b_i$.

	We now prove that the graph $G-\{b_i,d_j\}$ has diameter $3$ if and only if
	the edge $\{c_i,d_j\}$ is \emph{not} present in $G$, that is, iff $X_{i,j} = 0$. 
	When arguing the diameter above, edge $\{b_i,d_j\}$ was only needed for the paths   
	$(a_i,b_i,d_j)$ and $(c_i,b_i,d_j)$.
	Consider the neighborhoods of the three vertices in $G-\{b_i,d_j\}$,
	$N[a_i] = A \cup \{c_i, d_i\}$,
	$N[c_i] = C \cup \{a_i,b_i\} \cup \{ d_k \mid X_{i,k} = 1\}$, and
	$N[d_j] = D \cup (B{\setminus}\{b_i\}) \cup \{ c_k \mid X_{k,j} = 1\}$.
	If $X_{i,j} = 1$, then the neighborhoods intersect, namely in $c_i$,
	keeping the diameter at $2$.
	If, however, $X_{i,j} = 0$, then $N[a_i] \cap N[d_j] = \emptyset$
	and the diameter increases to $3$. 
\end{proof}

\begin{figure}
  \captionsetup[subfigure]{justification=centering}
  \begin{subfigure}[t]{0.49\textwidth}
    \centering
    \includegraphics[scale=0.9,page=1]{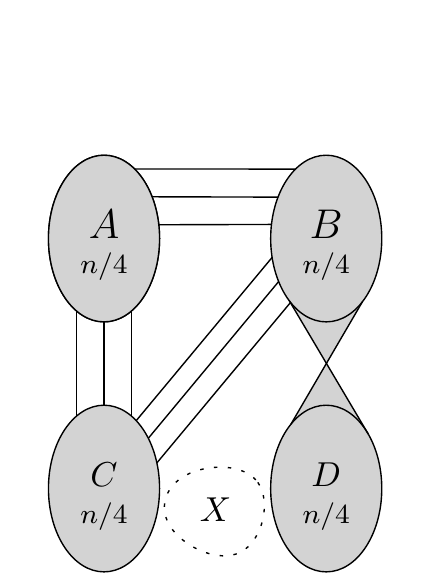}
    \caption{}
    \label{fig:lower_bound_dense}
  \end{subfigure} \hspace*{-.5cm}
  \begin{subfigure}[t]{0.49\textwidth}
    \centering
    \includegraphics[scale=0.9,page=2]{lower_bound_1}
    \caption{}
    \label{fig:lower_bound_sparse}
  \end{subfigure}
  \caption{Illustration of
  		\autoref{lem:space_lower_bound_dense} \textbf{\textsf{(\subref{fig:lower_bound_dense})}}
  		and of 
  		\autoref{lem:space_lower_bound_sparse} \textbf{\textsf{(\subref{fig:lower_bound_sparse})}}.
		The full ellipses $A$, $B$, $C$, $D$ are cliques on the respective number of vertices, 
		the dashed ellipse $R$ is an independent set.
		The three parallel lines stand for matchings, the two crossed lines for a biclique.
		Edges encoding the binary matrix $X$ run between $C$ and $D$.
		Every vertex $r \in R$ is connected to $a_1 \in A$, $b_1 \in B$, and $c_1 \in C$. 
  	}
  \label{fig:lower_bounds}
\end{figure}

We now refine the result to give a better bound for sparse graphs.
Note that a logarithmic gap remains between \autoref{lem:space_lower_bound_sparse}
and \autoref{thm:single_failure}
since we lower bound the space at $\Omega(m)$ bits while the FDO takes this many words.

\begin{restatable}{lemma}{lowerboundsparse}
\label{lem:space_lower_bound_sparse}
	There is a graph $G$ with $m$ edges 
	such that $G\,{-}\,e$ has diameter $2$ or $3$ for any edge $e \in E$.
	A data structure that decides
	which one is the case must take $\Omega(m)$ bits of space.
\end{restatable}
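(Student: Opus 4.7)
The plan is to sparsify the construction of \autoref{lem:space_lower_bound_dense} by decoupling the size of the encoded matrix from the number of vertices. I would keep the core gadget on $4k$ vertices, but now take $k$ of order $\sqrt{m}$, and pad the graph with a large independent set $R$ whose vertices each have constant degree. This lets the vertex count grow freely while both the edge count and the encoded information stay $\Theta(m)$.

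Concretely, take $|A| = |B| = |C| = |D| = k$ with exactly the adjacencies of \autoref{lem:space_lower_bound_dense}: cliques on each group; matchings on $(A,B)$, $(B,C)$, $(A,C)$ under the shared numbering $a_i, b_i, c_i$; a biclique on $(B,D)$; and edges $\{c_i,d_j\}$ encoding a binary matrix $X \in \{0,1\}^{k \times k}$. Introduce an independent set $R$ and join every $r \in R$ to the three anchors $a_1, b_1, c_1$, as in \autoref{fig:lower_bound_sparse}. For a target edge count $m$, choose $k = \Theta(\sqrt{m})$ and $|R| = \Theta(m)$ so that the total $\Theta(k^2) + 3|R|$ lands at $m$; small adjustments can be absorbed by fixing the Hamming weight of $X$ or padding $R$'s anchor set by a constant.

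Next, I would verify that $\diam(G) = 2$: the core analysis is identical to \autoref{lem:space_lower_bound_dense}, and every $r \in R$ reaches any core vertex via the appropriate anchor and any $r' \in R$ via any anchor. For the separation argument I query $F = \{b_i, d_j\}$ and enumerate length-$2$ paths from $a_i$ to $d_j$: an intermediate vertex $w$ must lie in $B$ (forcing $w = b_i$, the removed edge), in $C$ (forcing $w = c_i$, which is an edge iff $X_{i,j} = 1$), or in $R$, but the last option is excluded since $d_j$ has no neighbor in $R$. Hence $\diam(G - F) \in \{2,3\}$ with $\diam(G-F) = 3$ iff $X_{i,j} = 0$. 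A routine case check confirms that pairs involving $R$ never push the diameter past $3$; the only slightly delicate case is $i = 1$, where both $a_1$-$d_j$ and $r$-$d_j$ are rescued by the length-$2$ path through $c_1$ exactly when $X_{1,j} = 1$.

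Since a correct FDO must answer all $k^2$ queries $F = \{b_i, d_j\}$, its state allows the decoding of the whole matrix $X$, so by an incompressibility argument on uniformly random $X \in \{0,1\}^{k^2}$ there exists a graph in the family on which the FDO uses $\Omega(k^2) = \Omega(m)$ bits. The main obstacle is making sure the padding by $R$ does not create new length-$2$ $a_i$-$d_j$ paths that would hide the dependence on $X_{i,j}$; this is precisely why $R$ is attached only at $a_1, b_1, c_1$ and never at a vertex of $D$.
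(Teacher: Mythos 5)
Your proposal is correct and is essentially the paper's own proof: both sparsify the dense gadget to a core on $\Theta(\sqrt{m})$ vertices per group, attach a padding set $R$ of degree-$3$ vertices to the anchors $a_1, b_1, c_1$, and observe that the critical pair $(a_i, d_j)$ (together with $(r, d_j)$ when $i=1$) still encodes $X_{i,j}$ via length-$2$ paths. The only cosmetic difference is that the paper allows extra edges inside $R$ to hit the exact edge count $m$ rather than tweaking the Hamming weight of $X$, but this does not affect correctness.
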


\begin{proof}
	The main weakness of the construction in \autoref{lem:space_lower_bound_dense}
	is that it requires $\Omega(n^2)$ edges inside the cliques.
	As it turns out, this is not necessary 
	and we can sparsify the graph $G$ as long as we keep its diameter at $2$.
	\autoref{fig:lower_bound_sparse} shows the idea of the sparsification.
	
	Let $m'$ be a parameter to be fixed later.
	We now store a binary $\sqrt{m'} \times \sqrt{m'}$ matrix $X$.
	Split the vertices into five groups,
	where $A$, $B$, $C$, $D$ each contain $\sqrt{m'}$ vertices and $R$ the remaining $n - 4\sqrt{m'}$.
	The edges among vertices in $A$ through $D$ 
	are the same as in \autoref{lem:space_lower_bound_dense}.
	Each vertex in $R$ has degree $3$ and is connected to $a_1$, $b_1$, and $c_1$.
	The graph $G$ has
	$4 \binom{\sqrt{m'}}{2} + 3\sqrt{m'} + m' +
		|\{ (i,j) \mid X_{i,j} = 1\}| + 3(n - 4\sqrt{m'}) = O(m')$ edges.
	We fix the parameter $m'$ such that the total number of edges is $m$.
	If needed, we introduce additional edges among vertices in $R$ without affecting the result.
		
	Note that the eccentricity of any vertex in $r \in R$ is $2$ 
	(even if $R$ is not an independent set).
	Vertex $a_i$ is reached via the path $(r,a_1,a_i)$,
	similar for the vertices in $B$ and $C$, the ones in $D$ are reached via $b_1$.
	Moreover, for any edge involving $r$, say $\{r,a_1\}$, we have $b_1 \in N[r] \cap N[a_1]$.
	Therefore, the proof that $G$ has diameter $2$, $G-e$ has diameter $2$ or $3$,
	and $G-\{b_i,d_j\}$ has diameter $3$ iff $X_{i,j} = 0$ is almost exactly
	as in \autoref{lem:space_lower_bound_dense}.
	The sole difference is the case in which the edge $\{b_1,d_j\}$ fails
	since this may also increase the eccentricity of $r$.
	This is settled by observing that the neighborhood $N[r] = \{r,a_1,b_1,c_1\}$ in $G-\{b_1,d_j\}$
	intersects $N[d_j]$ iff $X_{1,j} = 1$.
	To accommodate all possible matrices $X$, we require $\Omega(m') = \Omega(m)$ bits.
\end{proof}

The same construction shows that for edge-weighted graphs
there is no $(2{-}\varepsilon)$-approximate FDO, 
for any $\varepsilon = \varepsilon(m) > 0$,
with space $o(m)$.
In more detail, we choose an $\varepsilon' > 0$ small enough so that 
$\varepsilon' < 2\varepsilon/(1-\varepsilon)$
and give weight $\varepsilon'$ to all matching edges as well as the edges incident to vertices in $R$,
all other edges are weighted $2$.
One can verify that $\diam(G) = 2 + \varepsilon'$
and the fault-tolerant diameter $\diam(G - \{b_i,d_j\})$ remains at that value
iff $\{c_i,d_j\}$ is present, it raises to $4+\varepsilon'$ otherwise.
The bound on the stretch cannot be improved as shown by the trivial FDO discussed in the introduction, which gives a $2$-approximation in $O(n)$ space.

\section{Multiple Edge Failures}
\label{sec:multiple_failures}

We now turn to multiple edge failures.
Recall that in the fault-tolerant setting the maximum number $f$ of failures is known in advance,
and stretch, space, preprocessing, and query time usually depend on $f$.
In this section, we first prove the following lemma.
Let $\alpha = \alpha(m,n)$ denote the inverse Ackermann function.

\begin{lemma}[\autoref{thm:multiple_failures} with explicit logarithmic factors]
	For every undirected graph with non-negative edge weights,
	there exists a deterministic combinatorial $(f\,{+}\,2)$-approximate $f$-FDO 
	that takes $O(fn\log^2 \! n)$ space and has 
	$O(fm \nwspace \alpha+fn\log^3 \! n)$ preprocessing time and $O(f^2 \log^2 \! n)$ query time.
	For $f=1$, the size of the oracle is $O(n)$,
the preprocessing time $O(m \nwspace \alpha+n\log n)$, and the query time is constant.
\end{lemma}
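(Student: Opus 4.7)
The plan is to reduce the $f$-FDO problem to approximately computing the eccentricity of a fixed vertex under $f$ edge failures. Pick an arbitrary root $r \in V(G)$. For any failure set $F$, the triangle inequality yields $\ecc(r, G-F) \le \diam(G-F) \le 2\,\ecc(r, G-F)$. Hence, returning $2\widehat{E}$ where $\widehat{E}$ satisfies $\ecc(r, G-F) \le \widehat{E} \le \tfrac{f+2}{2}\,\ecc(r, G-F)$ produces an $(f+2)$-approximation of $\diam(G-F)$; it remains to build a $\tfrac{f+2}{2}$-stretch eccentricity oracle for the root within the parameters claimed in the lemma.

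First, I would compute a shortest-path tree $T$ rooted at $r$ using Dijkstra with an efficient priority queue and union-find primitives, in time $O(m\,\alpha(m,n) + n\log n)$. On top of $T$, I would construct a sparse fault-tolerant single-source data structure $\mathcal{H}$ adapted from the approximate shortest-path-tree framework of Bilò et al.~\cite{BGLP16}. The structure $\mathcal{H}$ is organised into $O(\log n)$ levels indexed by exponentially increasing distance thresholds from $r$; at each level, and for every vertex, it stores $O(f)$ bypass edges from which an approximate replacement path to that vertex can be reconstructed in $G-F$ for any $F$ with $|F| \le f$. The total size is $O(fn\log^2 n)$ and the construction runs in $O(fm\,\alpha + fn\log^3 n)$ time via $O(f\log n)$ passes of modified Dijkstra / union-find.

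For the query, I would precompute a heavy-path decomposition of $T$ together with a balanced BST indexing the bypass edges of each heavy path at each level. Given $F$, the set $F \cap E(T)$ cuts $T$ into at most $f{+}1$ subtrees, and the farthest vertex from $r$ in $G-F$ is located by walking level-by-level over the decomposition while maintaining the running maximum. Each of the $O(f)$ failed tree edges triggers up to $O(f)$ cascading bypass lookups (a bypass edge may itself lie in $F$), and each lookup is resolved in $O(\log^2 n)$ time through the BST index, for a total query time of $O(f^2 \log^2 n)$.

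The main obstacle is pinning the stretch at exactly $\tfrac{f+2}{2}$ rather than a generic $O(f)$: the analysis must argue that each additional bypass failure inflates the approximate $r$-to-$u$ distance by at most an additive $\tfrac12\,d_{G-F}(r,u)$, exploiting the geometric doubling of distance scales stored in $\mathcal{H}$, so that the $f$ failures accumulate into the named factor instead of compounding multiplicatively. The single-failure case $f{=}1$ collapses to a refined variant of the folklore FDO of \Cref{subsec:prelims_trivial}: precompute $\ecc(r,G-e)$ for every tree edge $e$ via a linear-time single-source replacement-paths subroutine on the undirected weighted graph in $O(m\,\alpha + n\log n)$ time, store the $n-1$ values in an array indexed by tree edges, and on query $e$ return (twice) the stored value if $e \in E(T)$ or (twice) $\ecc(r,G)$ otherwise, giving constant query time and $O(n)$ space.
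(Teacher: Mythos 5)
Your plan hinges on building a $\tfrac{f+2}{2}$-stretch single-source eccentricity oracle under $f$ failures in $O(fn\log^2 n)$ space, and you candidly flag as the ``main obstacle'' that you do not know how to get that stretch rather than a generic $O(f)$. That obstacle is real and, as far as I can tell, fatal: the best known single-source $f$-DSO of this size, due to Bilò et al.~\cite{BGLP16}, has stretch $2f+1$, which is much worse than $\tfrac{f+2}{2}$, and nothing in the level/bypass structure you sketch explains how to beat it. So the reduction ``return $2\widehat E$ for a $\tfrac{f+2}{2}$-approximate $\widehat E$'' is sound as a black-box reduction, but there is no oracle to plug into it.

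The paper avoids this entirely by \emph{not} approximating the eccentricity. It reuses the data structure of~\cite{BGLP16} (the re-weighted MST $T_F$ of $G-F$ under $w'(e)=d(s,x)+w(e)+d(y,s)$) but runs a different query algorithm: it contracts the components of $T-F$ to a tree $T'$, computes $\Delta = \max_i \bigl(w'(e_i)-d(s,r_i)\bigr)$ over the bridging edges of $T'$, and outputs $\widehat D = f\Delta + 2\max_{t}d(s,t)$. The approximation guarantee comes from bounding each summand \emph{separately} against the diameter: \autoref{lem:lower_bound_stacs_algorithm} gives $\Delta \le \diam(G-F)$, and trivially $\max_t d(s,t) \le \diam(G)\le\diam(G-F)$, whence $\widehat D \le (f+2)\diam(G-F)$; the lower bound $\widehat D\ge\diam(G-F)$ is shown by stitching paths through $T_F$. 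No per-query eccentricity estimate with a $\tfrac{f+2}{2}$ factor is ever produced or needed, so the ``geometric doubling so that $f$ failures accumulate additively'' argument you hope for never has to be carried out.

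Your $f=1$ case has a separate problem: you propose to compute the exact eccentricities $\ecc(r,G-e)$ for all tree edges $e$ via ``a linear-time single-source replacement-paths subroutine'' in $O(m\alpha+n\log n)$ time. That would be a deterministic near-linear-time SSRP algorithm for undirected weighted graphs, which is not available and would be a notable result on its own. The paper instead only needs, for each $e\in E(T)$, the minimum-$w'$ non-tree edge crossing the cut $T-e$, i.e.\ MST sensitivity analysis, which is known in $O(m\log\alpha)$ time~\cite{Pettie15}; from these values $\Delta$ is read off directly and the eccentricities are never computed.
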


Bilò et al.~\cite{BGLP16} designed an $(2f{+}1)$-approximate single-source $f$-DSO.
That means, the oracle processes an undirected graph $G$ with non-negative edge weights and a distinguished source $s$, and, upon query $(t,F)$ with $|F| \le f$, it returns $d(s,t,F)$.
The oracle can be built in $O\big(fm\alpha + fn \log^3 n\big)$ time, has size $O(fn \log^2 n)$, 
and answers queries in $O(f^2 \log^2n)$ time.
In principle we can modify the oracle so as, when queried with the set $F$, 
it returns twice the eccentricity of $s$ in the graph $G\,{-}\,F$. This would clearly allow us to construct an $f$-FDO of stretch  $2{\cdot}(2f{+}1)$.
We show that the same oracle construction, but with a better query algorithm, allows us to develop an $f$-FDO of stretch $f+2$.

We let $w(e)$ denote the weight of the edge $e \in E$.
The length of a path is now defined as the sum of its edge weights; 
the definitions of distance and diameter are adjusted accordingly.
The oracle in~\cite{BGLP16} first computes a shortest path tree $T$ of $G$ 
rooted at the source $s$ and uses it to re-weight all the edges of $G$. 
The new weight function $w'$ assigns weight of $0$ to each edge of $T$
and weight $w'(e)=d(s,x)+w(e) + d(y,s)$ to any other edge $e= \{x,y\}$.
When queried with $(t, F)$, the oracle computes a spanning forest $T_F$ of $G\,{-}\,F$ w.r.t.\ the new weight function $w'$ in $O(f^2 \log^2 n)$ time. Let $k=|F \cap E(T)|$. The oracle replaces the $k$ failing edges in $F \cap E(T)$ with a minimum-weight set of edges in $G\,{-}\,F$ w.r.t.\ to $w'$, say $E_F$,
whose addition to $T\,{-}\,F$ forms a spanning forest of $G\,{-}\,F$.\footnote{%
	This is done by computing, for each unordered pair $\phi=(T',T'')$ of connected components of $T-F$, 	the minimum-weight edge w.r.t. $w'$, say $e_{\phi}$, 
	that has one endpoint in $T'$ and the other endpoint in $T''$. 
	Then, the set $E_F$ is computed in $O(f^2)$ time using any time-efficient algorithm 
	for computing a minimum spanning tree of an auxiliary graph 
	in which each of the connected components of $T-F$ is modelled by a vertex
	and the edge between the unordered pair $\phi=(T',T'')$ of $T-F$ has a weight equal to $w'(e_\phi)$. 	The authors of~\cite{BGLP16} design a data structure that is able to retrieve, 
	for each pair $\phi=(T',T'')$ of connected components of $T-F$, 
	the edge $e_{\phi}$ in $O(\log^2 n)$ time.
}
The obtained forest $T_F$ is then used to estimate the distance from $s$ to $t$ in $G\,{-}\,F$.
We reuse a nice property proven in~\cite{BGLP16}.

\begin{lemma}[Bilò et al.~\cite{BGLP16}]
\label{lem:msf_of_H_minus_F_is_also_a_msf_of_G_minus_F}
$T_F$ is a minimum spanning forest of $G\,{-}\,F$ w.r.t.\ $w'$.
\end{lemma}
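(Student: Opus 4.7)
The plan is to invoke the standard cut and cycle properties of minimum spanning forests. First observe that the re-weighting is nonnegative: every edge of $T$ receives $w'=0$, while every non-tree edge receives $w'(e)=d(s,x)+w(e)+d(y,s)\ge 0$. Hence tree edges are exactly the lightest possible under $w'$.

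\textbf{Step 1 (the surviving tree edges are safe).} I would first show that the edges of $T\setminus F$ lie in some MSF of $G-F$ under $w'$. Since $T\setminus F$ is a subforest of $T$, it is acyclic. For each $e\in T\setminus F$, the fundamental cut induced by removing $e$ from $T\setminus F$ is crossed by $e$ with weight $0$, which is the minimum possible in the nonnegative weight function $w'$. The cut property of MSFs then certifies $e$ as safe, and iterating over all edges of $T\setminus F$ shows that this entire forest extends to some MSF of $G-F$.

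\textbf{Step 2 (contraction).} Having committed to the safe partial solution $T\setminus F$, any MSF of $G-F$ can be completed by contracting each connected component of $T\setminus F$ to a supernode and computing an MSF of the resulting auxiliary multigraph on these supernodes. Between any pair $\phi=(T',T'')$ of components, the cycle property applied to the $2$-cycle formed by two parallel edges rules out all but the minimum-$w'$-weight representative $e_\phi$. Hence the completion step reduces to computing an MSF of the simple auxiliary graph whose edge weights are $w'(e_\phi)$, which is precisely the object constructed in the query procedure.

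\textbf{Step 3 (conclusion).} The query algorithm produces $E_F$ as an MSF of that auxiliary graph, so the union $(T\setminus F)\cup E_F=T_F$ is an MSF of $G-F$ with respect to $w'$, as claimed.

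The main subtlety lies in Step 2: arguing that restricting each parallel class of contracted edges to a single representative $e_\phi$ preserves optimality. This follows from a direct application of the cycle property, but must be stated carefully since the weights $w'(e_\phi)$ need not be distinct across different component-pairs, so ties have to be broken consistently by the MST subroutine used on the auxiliary graph.
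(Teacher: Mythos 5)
The paper does not give a proof of this lemma; it is stated with a citation to Bil\`o et al.~\cite{BGLP16}, so there is no in-paper argument to compare against. Your reconstruction is correct and is the natural route: $T\setminus F$ consists precisely of the zero-weight edges of $G-F$ under the nonnegative $w'$, so for each $e\in T\setminus F$ its fundamental cut (which in $G-F$ is crossed by no other tree edge) certifies it as safe; the standard contraction lemma then reduces the problem to an MSF of the auxiliary graph on the components of $T-F$; and the reduction to one representative $e_\phi$ per component pair is justified by the cycle property on parallel edges, matching the footnote's description of how $E_F$ is computed. One small remark: the tie-breaking ``subtlety'' you raise at the end is not actually a concern. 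The lemma asserts only that $T_F$ \emph{is} an MSF, not that it is a canonical one, and any minimum-weight spanning forest of the auxiliary graph produces a valid $E_F$ irrespective of how ties between distinct component pairs are resolved; consistency of tie-breaking plays no role in the correctness claim.
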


Our query algorithm works as follows. 
Let tree $T$ be rooted at $s$ and $F \cap E(T) = \{ f_1,\dots,f_k\}$ with $k\leq f$
the edges in $T$ that are also in $F$.
Let $T_0,\dots,T_k$ denote the $k+1$ subtrees of $T\,{-}\,F$,
and $r_i$ the root of the subtree $T_i$.
W.l.o.g., we assume $r_0=s$. We use $f_1,\dots,f_k$ to compute the roots $r_1,\dots,r_k$ in $O(k)$ time.
We then build a forest $T'$ on $k+1$ new vertices $v_0,\dots,v_k$, where $v_i$ represents $T_i$.
The forest $T'$ contains an edge $\{v_i,v_j\}$ 
iff $E_F$ contains an edge $e$ with one end point in $V(T_i)$ and the other in $V(T_j)$. 
Obviously, if $T'$ is not connected, then we can simply certify that $\diam(G\,{-}\,F)=\infty$.
So, we assume that $T'$ is a tree. 
We root $T'$ at $v_0$ and denote by $e_i$ the edge that joins $v_i$ with its parent $p(v_i)$.
We compute the value $\Delta=\max_{1 \le i \le k} w'(e_i)-d(s,r_i)$
and output $\widehat D=f\Delta + 2 \cdot \max_{t \in V}d(s,t)$.
The time needed for the query algorithm is dominated by the computation of $T$ in time $O(f^2 \log^2 n)$
as all the new operations can be performed in $O(f^2)$ time.
Observe that $\max_{t \in V}d(s,t)$ is independent of $F$ 
and can be precomputed in time $O(n)$.

For a single failure, $f=1$, the query time can be reduced to $O(1)$.
In fact, for each edge $e$ of $T$, it is enough to precompute the minimum weight edge of 
$E(G){\setminus}E(T)$, w.r.t. weight function $w'$, that crosses the cut induced by $T-e$.
This, a.k.a.\ the sensitivity analysis problem of a minimum spanning tree,
can be solved in $O(m\log \alpha)$ time on a graph with $m$ edges~\cite{Pettie15}.
We show in the remainder that $\widehat D$ is an $(f+2)$-approximation of $\diam(G-F)$.
The proof of the following lemma can be found in \autoref{app:proofs_multiple_failures}.

\begin{restatable}{lemma}{stacsalgorithm}
\label{lem:lower_bound_stacs_algorithm}
We have that $\diam(G-F)\geq \Delta$.
\end{restatable}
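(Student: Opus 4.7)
The plan is to fix an arbitrary index $i \in \{1, \dots, k\}$ and establish the single-source bound $d_{G-F}(s, r_i) \geq w'(e_i) - d(s, r_i)$; taking the maximum over $i$ and using $\diam(G-F) \geq d_{G-F}(s, r_i)$ then yields the lemma. I may assume the forest $T'$ is a tree---otherwise the algorithm already certifies $\diam(G-F) = \infty$ and the inequality is vacuous---which means $T_F$ spans $G-F$ and, in particular, $G-F$ is connected.

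The first key step is to interpret $e_i$ as the lightest edge crossing a suitable cut. By \autoref{lem:msf_of_H_minus_F_is_also_a_msf_of_G_minus_F}, $T_F$ is a minimum spanning tree of $G-F$ under $w'$. Removing $e_i$ from $T_F$ separates the tree into two components; correspondingly, let $A \subseteq V$ be the union of those subtrees $T_j$ whose super-vertex $v_j$ lies in the subtree of $T'$ rooted at $v_i$ (so $r_i \in A$), and set $B = V \setminus A$ (so $s \in B$). By the MST cut property, every edge of $G-F$ with one endpoint in $B$ and the other in $A$ has $w'$-weight at least $w'(e_i)$.

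Next, I would pick any shortest $s$-to-$r_i$ path in $G-F$ and consider any edge $\{x, y\}$ on it with $x \in B$ and $y \in A$ (such an edge exists because the path begins in $B$ and ends in $A$). Unfolding the definition of $w'$ and applying the triangle inequality $d(s, y) \leq d(s, r_i) + d(r_i, y)$ in $G$ yields
\[
w'(e_i) \;\leq\; w'(\{x, y\}) \;=\; d(s, x) + w(\{x, y\}) + d(s, y) \;\leq\; d(s, x) + w(\{x, y\}) + d(s, r_i) + d(r_i, y).
\]
On the other hand, decomposing the shortest $s$-to-$r_i$ path at $\{x, y\}$ gives $d_{G-F}(s, r_i) \geq d_{G-F}(s, x) + w(\{x, y\}) + d_{G-F}(y, r_i)$, and since deleting edges only increases distances the two terms on the right are at least $d(s, x)$ and $d(y, r_i)$, respectively. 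Combining these two bounds, and using $d(y, r_i) = d(r_i, y)$ by symmetry of the undirected metric, the $d(s, x)$ terms cancel and the $d(r_i, y)$ terms cancel, leaving $d_{G-F}(s, r_i) \geq w'(e_i) - d(s, r_i)$, exactly as required.

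The main subtlety I foresee is correctly identifying the cut $(B, A)$ so that the MST cut property gives the edge inequality for $e_i$, together with the precise cancellation: the excess additive term $d(r_i, y)$ introduced by the triangle inequality on $d(s, y)$ must match the term $d(y, r_i)$ coming from the path decomposition. Once that bookkeeping is set up, the rest is just the definition of $w'$ and the fact that $d_{G-F} \geq d_G$.
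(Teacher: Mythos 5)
Your proposal is correct and follows essentially the same route as the paper's proof: after reducing to the single‑source bound $d_{G-F}(s,r_i)\geq w'(e_i)-d(s,r_i)$, you use \autoref{lem:msf_of_H_minus_F_is_also_a_msf_of_G_minus_F} via the MST cut property to lower‑bound $w'$ of a crossing edge by $w'(e_i)$, and then combine the definition of $w'$, the triangle inequality in $G$, and the decomposition of a replacement path at a crossing edge. The paper phrases the cut‑property step as an explicit exchange argument and fixes the maximizing index $i^*$ up front, but the algebra and the key ideas are identical.
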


We now prove the approximation with the help of \autoref{lem:lower_bound_stacs_algorithm}.

\begin{lemma}
The value $\widehat D$ satisfies $\diam(G-F) \leq \widehat D \leq (f+2)\diam(G-F)$.
\end{lemma}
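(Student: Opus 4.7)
The plan is to prove the two inequalities separately. The upper bound $\widehat D \leq (f+2)\diam(G-F)$ is the easy direction: by Lemma \ref{lem:lower_bound_stacs_algorithm}, $\diam(G-F) \geq \Delta$, and since removing edges can only increase distances, $\diam(G-F) \geq \ecc(s, G-F) \geq \ecc(s, G) = \max_{t \in V} d(s, t)$. Plugging both into $\widehat D = f\Delta + 2 \max_{t} d(s, t)$ immediately yields $\widehat D \leq (f+2)\diam(G-F)$.

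The interesting direction is $\widehat D \geq \diam(G-F)$. I would prove $d_{G-F}(u, v) \leq \widehat D$ for every pair $u, v \in V$. If $T'$ is disconnected the algorithm returns $\infty$ and the claim is vacuous; otherwise $T'$ is a tree on $k+1$ vertices with $k \leq f$ edges. Since $T_F$ is a spanning tree of $G-F$, it suffices to bound $d_{T_F, w}(u, v)$. The unique $u$-$v$ path in $T_F$ alternates between segments inside subtrees $T_{a_0} \ni u,\, T_{a_1},\, \ldots,\, T_{a_\ell} \ni v$ and cross edges from $E_F$. The key combinatorial observation is that the induced sequence $v_{a_0}, v_{a_1}, \ldots, v_{a_\ell}$ traces a simple path in $T'$, because any repetition in the sequence would force the connected subtree involved to be re-entered by a different cross edge, creating a cycle in the tree $T_F$. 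Since $T'$ has only $k$ edges, $\ell \leq k \leq f$.

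To bound the length of the $T_F$-path I would split it into subtree segments and cross edges. Inside $T_{a_j}$ the tree distance between the entry and exit points of the segment is at most $d_G(s, \cdot) + d_G(s, \cdot) - 2\nwspace d_G(s, r_{a_j})$, because $T_{a_j}$ is part of the shortest path tree $T$ rooted through $r_{a_j}$. For each cross edge $e_{p_j} = \{\alpha_j, \beta_j\}$ on the path, its two endpoint-distance terms, which are hidden inside $w'(e_{p_j}) = d(s, \alpha_j) + w(e_{p_j}) + d(s, \beta_j)$, line up exactly with the entry/exit terms of the adjacent subtree segments. Summing and invoking $w'(e_{p_j}) - d_G(s, r_{p_j}) \leq \Delta$ from the definition of $\Delta$ leaves
\[ d_{T_F, w}(u, v) \leq d_G(s, u) + d_G(s, v) + \ell \Delta + \sum_j d_G(s, r_{p_j}) - 2 \sum_j d_G(s, r_{a_j}), \]
where $p_j \in \{a_{j-1}, a_j\}$ is the child endpoint of the $j$-th cross edge in $T'$. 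Each interior $r_{a_j}$ appears at most twice among the $r_{p_j}$'s (once per incident cross edge) while being subtracted twice, and the endpoints $r_{a_0}, r_{a_\ell}$ appear at most once but are still subtracted twice; hence the residual of the two sums is non-positive. Using $d_G(s, u), d_G(s, v) \leq \ecc(s, G)$ finally gives $d_{T_F, w}(u, v) \leq 2\ecc(s, G) + f\Delta = \widehat D$.

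The main obstacle is precisely the bookkeeping of root-distance contributions in that last step: one must track which endpoint of each cross edge sits in the parent versus the child subtree of $T'$ in order to see that the $r_{p_j}$ and $r_{a_j}$ sums collapse. A naive routing of $u$ to $v$ through $s$ would traverse up to $2f$ cross edges and only yield $2f\Delta + 2\ecc(s, G)$, losing a factor of two; the improvement to $f\Delta$ is exactly the payoff from exploiting the simple-path-in-$T'$ structure guaranteed by the minimum spanning forest $T_F$.
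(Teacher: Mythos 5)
Your proof is correct, and the lower-bound direction takes a genuinely different route from the paper's. The paper does not trace the actual $x$-$y$ path in $T_F$; instead it builds a connected \emph{auxiliary subgraph} of $T_F$ by gluing together, for \emph{all} $k$ edges $\{v_i,p(v_i)\}$ of $T'$, the unique $T_F$-path from $r_i$ to $r_{p(i)}$, plus the two paths $x\!\to\!r_x$ and $y\!\to\!r_y$. It bounds each $r_i$-to-$r_{p(i)}$ piece by $w'(e_i)-d(s,r_i)\le\Delta$ (the same telescoping you use) and the two end-pieces by $\max_t d(s,t)$, so the subgraph weighs at most $k\Delta + 2\max_t d(s,t)$; since $T_F$ is a tree and the subgraph connects $x$ to $y$, this immediately dominates $d(x,y,F)$. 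Your approach instead decomposes the unique $u$-$v$ path of $T_F$ directly into subtree segments and the $\ell\le k$ cross edges it actually traverses, using the observation that the projection of this path onto $T'$ is a simple path (which follows from $T_F$ being a tree), and then you carefully cancel the root-distance terms. The paper's argument is shorter because it avoids the per-vertex accounting entirely\emDash{}it never needs to know which cross edges the $u$-$v$ path uses or whether a root is a ``turning point'' of the $T'$-path\emDash{}whereas yours is tighter in the intermediate bound ($\ell\Delta$ rather than $k\Delta$), though the two coincide at the level of the stated $(f+2)$ guarantee since $\ell\le k\le f$. Both rely on the same two ingredients: $w'(e_i)-d(s,r_i)\le\Delta$ and $r_i$ being an ancestor in the SPT so that distances to $s$ telescope within each $T_i$. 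One small remark on your accounting: an interior $r_{a_j}$ actually appears \emph{at most once} among the $r_{p_j}$'s (a node of $T'$ has a unique parent, so it is the child endpoint of at most one of its two incident path-edges); your looser ``at most twice'' is still enough because it is matched by the factor $2$ in the subtrahend.
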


\begin{proof}
    Again, we only need to prove anything if $T'$ is connected, which implies that $T_F$ is connected. 
    By \autoref{lem:lower_bound_stacs_algorithm}, we have that $\diam(G\,{-}\,F) \geq  \Delta$.
    Moreover, $\diam(G\,{-}\,F)\geq \diam(G)\geq \max_{t \in V}d(s,t)$.
    The value $\widehat D$ returned by the query algorithm satisfies $\widehat D \leq f\Delta + 2\max_{t \in V}d(s,t) \leq (f\,{+}\,2)\diam(G-F)$.
    It remains to show that $\widehat D \geq \diam(G\,{-}\,F)$.
    We prove the latter by verifying that, for any two vertices $x$ and $y$,
    $\widehat D \geq d(x,y,F)$ holds. 
    
    Let $r_x$ and $r_y$ be the roots of the subtrees of $T\,{-}\,F$ 
    that contain $x$ and $y$, respectively.
    It is possible that $r_x=r_y$.
    Let $r_{p(i)}$ denote the root of the tree of $T\,{-}\,F$
    that corresponds to the parent vertex $p(v_i)$ in $T'$.
    Consider the subgraph of $T_F$ consisting of the edges of the paths in $T_F$ 
    between the following pairs of vertices: 
    (a) $r_{p(i)}$ and $r_i$ for every $i$, (b) $x$ and $r_x$, (c) $y$ and $r_y$. 
    The subgraph contains a path from $x$ to $y$ since $T_F$ is connected. 
    Therefore, the replacement distance $d(x,y,F)$ is upper bounded by the total weight of the subgraph.
    The path in $T_F$ between $r_x$ and $x$ has length at most $\max_{t \in V}d(s,t)$
    as $r_x$ is an ancestor of $x$ in the shortest path tree $T$ rooted at $s$;
    same for $r_y$ and $y$.
    Finally, for any $i > 0$, let $e_i=\{x_i,y_i\}$ be the edge in $E_F$ that caused the addition of the edge $(v_i,p(v_i))$ in $T'$.
    W.l.o.g., we assume that $x_i$ (resp., $y_i$) is a vertex of the tree of $T\,{-}\,F$ represented by $v_i$ (resp., $p(v_i)$) in $T'$.
    The path from $r_i$ to $r_{p(i)}$ in $G-F$ has length at most
    $d(r_i,x_i)+ w(e_i) + d(y_i,r_{p(i)}) \leq d(r_i,x_i) +w(e_i)+ d(y_i,s)+ d(s,r_i)-d(s,r_i)= w'(e_i)-d(s,r_i) \leq \Delta$. 
    Therefore, $d(x,y,F) \leq k\Delta + 2\max_{t \in V}d(s,t) \le f\Delta + 2\max_{t \in V}d(s,t) = \widehat D$.
\end{proof}

\subsection{Exact $f$-FDO for Low Diameter}
\label{subsec:multiple_failures_low_diameter}

We show that one can swap approximation for query time in low-diameter graphs,
namely, with diameter at most $n^{\delta/f}/(f{+}1)$ for arbitrary $\delta = \delta(m,n) > 0$.
This is summarized in \autoref{thm:multiple_failures_low_diameter}.
The case $f\,{=}\,1$ is solved like in \Cref{subsec:single_failure_algorithm}
only that there is no need for approximation here as the diameter is small enough
to process all pairs of vertices in time $O(n^{2+\delta})$.
We thus assume $f \ge 2$.
We adapt a space-saving technique introduced by Chechik et al.~\cite{ChCoFiKa17}.
In a bird's-eye view, we construct a recursion tree $T(s,t)$ of size $O(n^\delta)$
for each pair of vertices $s$ and $t$.
It contains all \emph{relevant} replacement distances $d(s,t,F)$ for sets $F$ with up to $f$ failures.
We then show how we can simulate the search for $\diam(G\,{-}\,F)$
in the $O(n^2)$ trees in total time $O(2^f)$.

Afek et al.~\cite[Theorem~1]{Afek02RestorationbyPathConcatenation_journal} showed 
that if $G$ is undirected,
then any shortest path in $G\,{-}\,F$, with $|F|$,
is a concatenation of at most $|F|+1$ shortest paths in $G$.
The condition on the diameter and $|F| \le f$ ensure that every path below
has length at most $(f{+}1) \cdot \diam(G) \le n^{\delta/f}$.

Assume we have access to a path-reporting $f$-DSO.
That means, upon query $(s,t,F)$,
the oracle either certifies that $d(s,t,F) = \infty$, i.e., $s$ and $t$ are disconnected in $G\,{-}F$,
or reports the replacement distance and a shortest $s$-$t$-path in $G\,{-}F$.
The preprocessing time of the combinatorial version is assumed to be $\Otilde(fmn^{1+\delta})$
with a $\Otilde(f n^{(1-1/f)\delta} +|P| \nwspace) = \Otilde(f n^{(1-1/f)\delta})$
query time w.h.p.\ reporting path $P$.
Here, we used the assumption $f \ge 2$, whence $|P| \le n^{\delta/f} = \Otilde(f n^{(1-1/f)\delta})$.
Alternatively, we have algebraic preprocessing in time $\Otilde(fn^{\omega+\delta})$.
We show how to obtain the oracle in \autoref{app:DSO_subroutine},
using an idea of Weimann and Yuster~\cite{WY13}
with a more refined analysis of the query time.

Fix two vertices $s$ and $t$.
We construct the tree $T(s,t)$ recursively.
Each node in the tree is associated with a set $F \subseteq \binom{V}{2}$ 
containing $f' = |F'| \le f$ possible failures.
We have $F' = \emptyset$ in the root.
Upon creation, the node queries the assumed oracle 
with $(s,t,F')$ and holds the returned path $P(s,t,F')$, if any.
If $f' = f$ or $s$ and $t$ are disconnected in $G\,{-}\,F'$, the node is a leaf.
Otherwise, it has $d(s,t,F')$ many children, one for each edge of $e \in E(P(s,t,F'))$ of the path.
The respective child is associated with the set $F' \cup \{e\}$.

The tree indeed has at least one node for every distinct replacement distance $d(s,t,F)$ 
with $|F| \le f$.
To see this, let $F',F$ be two sets with $F' \subseteq F \subseteq \binom{V}{2}$.
Clearly, we have $d(s,t,F') \le d(s,t,F)$,
but $d(s,t,F') < d(s,t,F)$ can only hold if $F{\setminus}F'$ contains an edge
of the path $P(s,t,F')$ in the node associated with $F'$.
%
The fan-out of each node is at most $n^{\delta/f}$, the height of the tree is $f$.
For all $s,t \in V$, the trees thus have $O(n^{2+\delta})$ nodes in total
and can be constructed with that many queries to the $f$-DSO in time $\Otilde(f n^{2+(2-1/f)\delta})$.

Consider the following naive algorithm to handle a query to the $f$-FDO
for the fault-tolerant diameter $\diam(G\,{-}\,F)$.
Each tree $T(s,t)$ is searched individually starting in the root.
The processing of a node depends on the associated set $F'$.
If it is a leaf or the set $F{\setminus}F'$ is disjoint from the replacement path $P(s,t,F')$,
then we return the length $d(s,t,F')$ of the path;
otherwise, we recurse on all children associated with $F' \cup \{e\}$ 
for all edges $e \in (F{\setminus}F') \cap E(P(s,t,F'))$.
By the argument as above,
the maximum over all reported distances is indeed 
$\max_{s,t \in V; F' \subseteq F} d(s,t,F') = \diam(G\,{-}\,F)$.
This approach can be improved significantly by aggregating the values $\{d(s,t,F')\}_{s,t \in V}$ already at construction.

Observe that we never query the underlying $f$-DSO with a set $F'$ that contains non-edges.
We prepare a hash table $H$ whose entries are indexed by subsets of $E$ of size at most $f$.
For every query $(s,t,F')$ we compare the returned replacement distance with the value $H[F']$.
If no such entry exists, we initialize it with $d(s,t,F')$;
else, we update it to $\max\{ H[F'], d(s,t,F')\}$.
The final table has size $O(n^{2+\delta})$ and we discard the trees.
The table $H$ is constructible w.h.p.\ in time $O(n^{2+\delta})$, 
guaranteeing constant query time~\cite{Dietzfelbinger94DynamicPerfectHashing,PaghRodler04CuckooHashing}.
However, to simulate the naive algorithm for the query $F$ to the $f$-FDO,
we have to check $H[F']$ for all $O(2^f)$ subsets $F' \subseteq F$
as we do not know which ones
were used during construction. 

\subsection{Space Lower Bound}
\label{subsec:multiple_failures_lower_bounds}  

We conclude with the space lower bound of \autoref{thm:space_lower_bound_multiple}.
It rules out any finite stretch in $o(fn)$ space for an arbitrary number $f$ of failures.
We use the fact that an $f$-FDO with finite stretch is able to decide 
whether the edges in $F$ are a cut-set of the graph.

	Assume for now that $f$ is even.
	Let $k$ be the largest integer such that $fk+1 \le n$.
	We construct a graph $G$ as follows.
	It has vertices $c, v_1, \dots, v_{fk}$ as well as $ n\,{-}\,fk\,{-}\,1$ auxiliary vertices.
	Define $E_i = \{\{v_i, v_j\} \mid 1 \le |i - j| \le f/2\}$.
	The edge set of $G$ is  $\bigcup_{i=1}^{fk} E_i$
	together with all possible edges $\{c,u\}$, including to the auxiliaries.
	In other words, $G$ consists of a star centered at $c$ with $n-1$ leaves,
	and leaves $v_i, v_j$ are joined by an edge iff their indices have difference at most $f/2$.
	Let set $\mathcal{G}$ contain all spanning subgraphs of $G$
	that retain at least all star edges incident to $c$.
	Since $|E_i| = f$,
	there are $|\mathcal{G}| = 2^{(f{-}1)f k/2} = 2^{\Omega(fn)}$ such subgraphs.

	Let $H$ be any subgraph in $\mathcal{G}$.
	For $i \neq j$ with $|i-j| \le f/2$,
	define the set $F_{i,j} = (E_i{\setminus}\{\{v_i,v_j\}\}) \cup \{\{c,v_i\}\}$.
	Note that $F_{i,j}$ may contain non-edges.
	We have $|F_{i,j}| = f$ and 
	evidently $\{v_i,v_j\}$ is present in $H$
	iff $H - F_{i,j}$ is connected.
	Any two $f$-FDOs for graphs in $\mathcal{G}$ thus differ in at least one bit.
	For odd values $f \ge 3$, we emulate this using $f-1$ failures.
	
	For the remaining case $f=1$, we use a different construction.
	W.l.o.g., $n$ is even,
	connecting a single excess vertex to some other vertex in the graph is immaterial.
	The graph $G$ contains two parallel paths $P_1$ and $P_2$, each on $n/2$ vertices,
	respectively numbered from $1$ to $n/2$. 
	The graph also contains a matching $M$
	in which the $i$-th vertex of $P_1$ is matched with the $i$-th vertex of $P_2$.
	Let $\mathcal{G}$ be the set of all spanning subgraphs
	that have at least all the edges of $P_1$ and $M$.
	We have $|\mathcal{G}|=2^{(n/2)-1} = 2^{\Omega(n)}$.
	Let $H \in \mathcal{G}$ and define $e_i$, with $i < n/2$, be the edge of $P_1$ between the $i$-th and $(i{+}1)$-th vertices. 
	The corresponding edge of $P_2$ is present in $H$ if and only $H-e_i$ is connected. 

\bibliographystyle{plainurl} 
\bibliography{FDO_bib}

\newpage
\appendix

\section{Proofs Omitted in \autoref{sec:prelims}}
\label{app:prelims_proof}

\spanneroracles*

\begin{proof}
The bounds on the query time and the size are by construction. We prove the upper bound on the stretch.
For every edge $e \in E(H)$ the oracle reports the exact value $\diam(G-e)$. Therefore, we only need to bound the approximation guarantee when the failing edge $e \not \in E(H)$. Let $e=(u,v)$. We have that $d(u,v,e)\leq 2k-1$.
As a consequence, any path of length $\ell$ that uses edge $e$ in $G$ has a length that is at most $\ell+2(k-1)$ in $G-e$ as we can bypass the edge $e$ by the path between $u$ and $v$ in $G-e$ of length at most $2k-1$. Therefore, $\diam(G-e) \leq \diam(G)+2(k-1)$. This implies that the value $\widehat D=\diam(G)+2(k-1)$ returned by the query oracle satisfies $\diam(G-e) \leq \widehat D \leq \diam(G)+2(k-1)$.
Therefore, using the fact that $\diam(G-e) \geq \diam(G)$, the stretch of the oracle is $1+2(k-1)/\diam(G)$.
\end{proof}

\section{Proofs Omitted in \autoref{sec:multiple_failures}}
\label{app:proofs_multiple_failures}

\stacsalgorithm*

\begin{proof}
If $\diam(G\,{-}\,F)=\infty$ there is nothing to show.
We thus assume that $T'$ is connected, which implies that also $T_F$ is connected.
Let $i^*$ be an index such that $\Delta=w'(e_{i^*})-d(s,r_{i^*})$.
We prove the lemma by showing that the replacement distance from the source to the root $r_{i^*}$ observes $d(s,r_{i^*},F) \geq w'(e_{i^*})-d(s,r_{i^*})$.
A replacement path $P(s,r_{i^*},F)$ (which exists as $T_F$ is a connected subgraph of $G\,{-}\,F$) 
crosses the cut induced by the removal of $e_{i^*}$ from $T_F$ with at least one edge, say $e = \{x,y\}$.
Let $x$ be in the connected component of $T_F\,{-}\,e_{i^*}$ as $r_{i^*}$. 
This implies  the following property:
\inlineequation[eq:lb_replacement_path]{d(s,r_{i^*},F) \geq d(s,y)+w(e)+d(x,r_{i^*})}.

By \autoref{lem:msf_of_H_minus_F_is_also_a_msf_of_G_minus_F}, $w'(e) \geq w'(e_{i^*})$ as otherwise we could replace $e_{i^*}$ by $e$ in $T_F$ and obtain a spanning tree of $G-F$ whose cost is strictly smaller than that of $T_F$.
Moreover, by the triangle inequality, $d(s,x) \leq d(s,r_{i^*})+d(x,r_{i^*})$, 
whence $d(x,r_{i^*})-d(s,x) \geq -d(s,r_{i^*})$. Starting from \Cref{eq:lb_replacement_path} and using both  inequalities, we can derive at the desired bound.
\begin{align*}
    d(s,r_{i^*},F)  & \geq d(s,y)+w(e)+d(x,r_{i^*}) = d(s,y)+w(e)+d(s,x)-d(s,x)+d(x,r_{i^*})\\
                & = w'(e)-d(s,x)+d(x,r_{i^*}) \geq w'(e_{i^*})-d(s,x)+d(x,r_{i^*})\\
                & \geq w'(e_{i^*})-d(s,r_{i^*}) =\Delta. \qedhere
\end{align*}
\end{proof}

\section{A Path-Reporting $f$-DSO with $\Otilde(f \nwspace n^{(1-1/f)\delta})$ Query Time}
\label{app:DSO_subroutine}

Recall that we assume that any shortest path in $G-F$ for any $F$ with $|F| \le f$ has at most $n^{\delta/f}$ many edges.
Also, the $f$-DSO is never queried with a set that contains non-edges.
We use a technique introduced by Weimann and Yuster~\cite{WY13}.
Let $k = Cf n^{\delta} \log n$ for a sufficiently large constant $C > 0$.
We create a set $G_1, \dots, G_k$ of spanning subgraphs of $G$.
For every $i$, $G_i$ is obtained by excluding any edge in $E$ independently with probability $n^{-\delta/f}$.
Combining the results in~\cite{WY13} with all replacement paths having length at most $n^{\delta/f}$ gives the following.

\begin{lemma}[Weimann and Yuster~\cite{WY13}]
\label{lem:Weimann_Yuster}
	With high probability, for all pairs of vertices $s,t \in V$ and sets $F \subseteq E$
	of at most $f$ edges, there exists an index $i \in [k]$ and a replacement path $P(s,t,F)$ 
	such that $P(s,t,F)$ is a shortest path in $G_i$.
\end{lemma}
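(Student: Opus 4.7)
The plan is to fix an arbitrary triple $(s,t,F)$ with $|F|\le f$, fix an arbitrary replacement path $P = P(s,t,F)$, and show that one particular sample $G_i$ has both the properties ``contains every edge of $P$'' and ``contains no edge of $F$'' with probability $\Omega(n^{-\delta})$. The full statement will then follow by taking the union bound over all triples and over the $k = Cfn^\delta \log n$ independent samples.

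First I would observe that these two properties alone suffice: if $G_i$ contains all of $E(P)$ and none of $F$, then $G_i \subseteq G\,{-}\,F$, so every $s$-$t$-path in $G_i$ has length at least $d(s,t,F)$; since $P$ itself lies entirely in $G_i$ and has length exactly $d(s,t,F)$, it must be a shortest $s$-$t$-path in $G_i$. In particular, no extra combinatorial work is needed after the probabilistic step.

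Next I would compute the probability that a single $G_i$ witnesses the triple. Because $P$ is a path in $G\,{-}\,F$, the edge sets $E(P)$ and $F$ are disjoint, so the two events ``$F \cap E(G_i)=\emptyset$'' and ``$E(P) \subseteq E(G_i)$'' depend on disjoint coin flips and are independent. Each edge of $F$ is excluded independently with probability $n^{-\delta/f}$, so the first event occurs with probability $n^{-|F|\delta/f} \ge n^{-\delta}$. For the second, the diameter assumption forces $|E(P)| \le n^{\delta/f}$, so the second event has probability at least $(1 - n^{-\delta/f})^{n^{\delta/f}} \ge 1/e - o(1)$, a positive constant. Multiplying gives a per-sample success probability of at least $c \cdot n^{-\delta}$ for some constant $c > 0$.

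Finally I would amplify and union bound. The probability that \emph{none} of the $k$ independent subgraphs succeeds for this particular triple is at most
\[
\bigl(1 - c\, n^{-\delta}\bigr)^{k} \;\le\; \exp\!\bigl(-c\, n^{-\delta} \cdot Cfn^\delta\log n\bigr) \;=\; n^{-cCf}.
\]
The number of triples is at most $n^2 \cdot \binom{m}{\le f} \le n^{O(f)}$, so choosing the absolute constant $C$ large enough makes the total failure probability $n^{-\Omega(f)}$, which is $n^{-\Omega(1)}$ as required. The main thing to be careful about is the independence argument in the per-sample step (using that $P \cap F = \emptyset$) and making sure the union-bound calibration of $C$ is written so that it works uniformly for $\delta = \delta(n,m)$, not only for constant $\delta$; aside from that, the argument is a direct instance of the sampling analysis of Weimann and Yuster~\cite{WY13} specialised to the shorter length bound $n^{\delta/f}$ available here.
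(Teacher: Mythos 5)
Your proof is correct and is exactly the Weimann--Yuster sampling argument that the paper invokes by reference without reproducing (the lemma is attributed directly to \cite{WY13}, and the paper itself only notes that the length bound $n^{\delta/f}$ specializes their analysis). The three steps you give — the reduction ``contains $E(P)$, misses $F$, hence $P$ is a shortest $s$-$t$-path in $G_i$,'' the independence of the two events because $E(P)\cap F=\emptyset$, and the amplification over $k=Cfn^\delta\log n$ samples followed by a union bound over $n^{O(f)}$ triples — match the intended argument. One small point worth tightening: the bound $(1-n^{-\delta/f})^{n^{\delta/f}}\ge 1/e-o(1)$ presumes $n^{\delta/f}\to\infty$, which need not hold for all admissible $\delta=\delta(n,m)$; however, the diameter hypothesis $\diam(G)\le n^{\delta/f}/(f{+}1)$ with $\diam(G)\ge 1$ forces $n^{\delta/f}\ge f{+}1\ge 2$, and $(1-1/x)^x\ge 1/4$ for all $x\ge 2$, so the per-sample success probability is at least $n^{-\delta}/4$ uniformly and the calibration of $C$ goes through as you intended.
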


\noindent
Along the same lines, we can bounds the number of graphs that exclude at least $f' \le f$.

\begin{lemma}
\label{lem:few_WY_graphs}
	Let $f' \le f$ be a positive integer.
	With high probability for all sets $F \subseteq E$ with $f' = |F|$,
	there are $\Otilde(f n^{(1-f'/f)\delta)})$ subgraphs $G_i$ 
	such that no edge of $F$ is in $G_i$.
\end{lemma}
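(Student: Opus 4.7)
The plan is to apply a Chernoff bound for each fixed $F$ of size $f'$ and then union-bound over all such sets. Because the $k$ subgraphs $G_1,\dots,G_k$ are constructed independently and, within each $G_i$, every edge is excluded independently with probability $p = n^{-\delta/f}$, the events ``$G_i$ contains no edge of $F$'' are independent across $i \in [k]$, each occurring with probability exactly $p^{f'} = n^{-f'\delta/f}$.

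First I would compute the expectation. Let $X_F$ count the number of indices $i \in [k]$ such that $F \cap E(G_i) = \emptyset$; then
\[
    \mu \;=\; \mathbb{E}[X_F] \;=\; k \cdot n^{-f'\delta/f} \;=\; C f \log n \cdot n^{(1-f'/f)\delta}.
\]
Since $\mu = \Omega(f \log n)$, a standard multiplicative Chernoff bound (e.g.\ $\Pr[X_F \geq 2\mu] \leq e^{-\mu/3}$) yields $\Pr[X_F \geq 2\mu] \leq n^{-C'' f}$ for a constant $C''$ that grows linearly with the constant $C$ from the definition of $k$.

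Next I would union-bound over sets $F$. The number of edge sets of size $f'$ is at most $\binom{m}{f'} \leq m^{f'} \leq n^{2f}$. Picking $C$ sufficiently large so that $C'' f \geq 2f + c$ for the desired exponent $c$ in the high-probability guarantee, the union bound shows that with probability at least $1 - n^{-c}$, every $F$ with $|F| = f'$ satisfies $X_F \leq 2\mu = \Otilde(f \, n^{(1-f'/f)\delta})$, which is exactly the claim.

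I do not expect any substantive obstacle: the independence of edge choices both across the $k$ subgraphs and within each subgraph makes $X_F$ a sum of $k$ independent Bernoulli variables, so Chernoff applies directly, and the mean $\mu$ is already $\Omega(\log n)$ which leaves enough slack to absorb the $n^{2f}$ union-bound factor by choosing the constant $C$ in the definition of $k$ large enough. The only thing to be careful about is that the same constant $C$ must be taken large enough for this lemma as well as for \autoref{lem:Weimann_Yuster}, but this is immediate by taking the maximum of the two constants.
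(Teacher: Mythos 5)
Your proof is correct and follows essentially the same route as the paper's: compute the expectation $\mu = k\,n^{-f'\delta/f} = \Otilde(f\,n^{(1-f'/f)\delta})$, apply a multiplicative Chernoff bound to show $X_F < 2\mu$ except with probability polynomially small in $n$ (using that $\mu = \Omega(f\log n)$ to absorb the union-bound factor), and then union-bound over the at most $O(n^{2f'})$ sets $F$ of size $f'$. The only minor cosmetic difference is that you bound the number of sets by $n^{2f}$ rather than the tighter $O(n^{2f'})$, but this costs nothing since the Chernoff exponent can be made arbitrarily large by choosing $C$.
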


\begin{proof}
	Let $k_F$ denote the number of subgraphs that exclude at least all of $F$.
	We have $\operatorname{E}[k_F] = k \nwspace (n^{-\delta/f})^{f'} = Cf \cdot n^{(1-f'/f)\delta} \ln n$.
	Let $N$ abbreviate $n^{(1-f'/f)\delta}$.
	Using Chernoff bounds (see e.g.~\cite{Mitzenmacher17Probablility}),
	we get that the probability of $k_F$ being more than double its expectation is
	$\operatorname{P}[k_e \ge 2\operatorname{E}[k_e] \nwspace] 
		\le \exp(-\operatorname{E}[k_F]/3) = n^ {-cfN/3}$.
	A union bound over the $O(n^{2f'})$ possible sets $F$ with $|F| = f'$ gives the lemma.
\end{proof}

\noindent
For each edge $e$,
we record during construction the set $S_e$ of graphs that exclude $e$.
Computing APSP in all the $G_i$ takes time $\Otilde(fmn^{1+\delta})$ combinatorially
or $\Otilde(fn^{\omega+\delta})$ algebraically.
In the same time bounds, we prepare an $k \times O(n^2)$ table
indexed by the subgraphs and pairs of vertices,
where the entry $[i,s,t]$ contains the distance $d_{G_i}(s,t)$.
Finally, we prepare, for each $G_i$, the information about its shortest paths
in the usual way of $n$ predecessor trees each.

Upon query $(s,t,F)$, $|F|\le f$, we first have to identify those graphs that contain no edge of $F$.
One could cycle to all graphs and check each in $O(f)$ time.
However, we can slightly improve on that using the sets $S_e$.
We intersect the sets for all edges in $F$
to obtain the set $S_F = \bigcap_{e \in F} S_e$ of precisely the indices we are looking for.
The intersection can be implemented such that it runs in time proportional
to the size of the smaller set.
By \autoref{lem:few_WY_graphs}, the size of \emph{all} intermediate sets $S_{F'}$
with $F' \subseteq F, |F'| = f'$
is bounded by $2C f n^{(1-f'/f)\delta} \ln n$ w.h.p.
Computing $S_F$ thus takes time linear in
\begin{equation*}
	\sum_{i = 1}^{|F|} 2C f n^{(1-i/f)\delta} \ln n 
		\le 2Cf n^{(1-1/f)\delta} (\ln n) \cdot \sum_{j=0}^\infty ({n^{-\delta/f}})^j 
		= \Otilde(f n^{(1-1/f)\delta}),
\end{equation*}

\noindent
where the last estimate is due to $\delta > 0$, whence ${n^{-\delta/f}} < 1$.

We retrieve the minimum of all values stored in entries $[i,s,t]$ with $i \in S_F$,
again in time $\Otilde(f n^{(1-|F|/f)\delta})$ w.h.p.
This is the correct replacement distance $d(s,t,F)$ w.h.p.\ by \autoref{lem:Weimann_Yuster}.
We return that minimum and, in case it is finite, a shortest $s$-$t$-path $P$ in $G_i$
for some index $i$ that assumes the minimum.
In total, the query time is $\Otilde(f \nwspace n^{(1-1/f)\delta} + |P|)$.

\end{document}